\documentclass[sigconf, nonacm, pdfa]{acmart}

\newcommand\vldbdoi{10.14778/3749646.3749692}
\newcommand\vldbpages{4269 - 4281}
\newcommand\vldbvolume{18}
\newcommand\vldbissue{11}
\newcommand\vldbyear{2025}
\newcommand\vldbauthors{\authors}
\newcommand\vldbtitle{\shorttitle}  
\newcommand\vldbavailabilityurl{URL_TO_YOUR_ARTIFACTS}
\newcommand\vldbpagestyle{empty} 

\usepackage{amsmath}
\usepackage{upgreek}  
\usepackage[ruled, vlined, linesnumbered]{algorithm2e}
\usepackage{multirow}
\usepackage{graphicx}
 \usepackage{colortbl}
\usepackage{tikz}
\usepackage{pgfplots}
\usetikzlibrary{patterns}
\usepgfplotslibrary{groupplots}
\usepackage{subfig}
\usepackage{enumitem}
\usepackage{xcolor}
\usepackage[normalem]{ulem}
\usepackage[a-2b]{pdfx}

\newenvironment{customlegend}[1][]{%
    \begingroup
    \csname pgfplots@init@cleared@structures\endcsname
    \pgfplotsset{#1}%
}{%
    \csname pgfplots@createlegend\endcsname
    \endgroup
}%
\def\addlegendimage{\csname pgfplots@addlegendimage\endcsname}

\newtheorem{theorem}{Theorem}

\DeclareMathOperator{\argmin}{argmin}
\DeclareMathOperator{\vol}{vol}
\DeclareMathOperator{\tlog}{tlog}
\DeclareMathOperator{\knn}{KNN}
\DeclareMathOperator{\diag}{diag}
\DeclareMathOperator{\cs}{cosSim}

\DeclareMathOperator{\eig}{eigen}

\newcommand{\transpose}{^\texttt{T}\xspace}
\newcommand{\ewise}{^\circ\xspace}

\newcommand{\ie}{\textit{i.e.,}\xspace}
\newcommand{\eg}{\textit{e.g.,}\xspace}

\newcommand{\ahg}{H\xspace}

\newcommand{\HG}{{H}\xspace}
\newcommand{\HGA}{\mathcal{H}\xspace}

\newcommand{\VS}{\mathcal{V}\xspace}
\newcommand{\VSA}{\mathcal{V}_X\xspace}
\newcommand{\ES}{\mathcal{E}\xspace}
\newcommand{\ESK}{\mathcal{E}_K\xspace}
\newcommand{\ESA}{\mathcal{E}_X\xspace}
\newcommand{\LS}{\mathcal{L}\xspace}

\newcommand{\RN}{\mathbb{R}\xspace}

\newcommand{\AM}{\mathbf{A}\xspace}
\newcommand{\HM}{\mathbf{H}\xspace}
\newcommand{\HMO}{\mathbf{H}_0\xspace}
\newcommand{\HMK}{\mathbf{H}_K\xspace}

\newcommand{\HL}{\mathbf{\widetilde{H}}\xspace}

\newcommand{\XM}{\mathbf{X}\xspace}
\newcommand{\DM}{\mathbf{D}\xspace}
\newcommand{\WM}{\mathbf{W}\xspace}
\newcommand{\PM}{\mathbf{P}\xspace}

\newcommand{\ZM}{\mathbf{Z}\xspace}
\newcommand{\YM}{\mathbf{Y}\xspace}
\newcommand{\UM}{\mathbf{U}\xspace}
\newcommand{\VM}{\mathbf{V}\xspace}

\newcommand{\IM}{\mathbf{I}\xspace}
\newcommand{\QM}{\mathbf{Q}\xspace}

\newcommand{\LdM}{\mathbf{\Lambda}\xspace}
\newcommand{\PiM}{\mathbf{\Pi}\xspace}
\newcommand{\SgM}{\mathbf{\Sigma}\xspace}

\newcommand{\GmM}{\mathbf{\Gamma}\xspace}
\newcommand{\PsM}{\mathbf{\Psi}\xspace}

\newcommand{\ThM}{\mathbf{\Theta}\xspace}
\newcommand{\FM}{\mathbf{F}\xspace}

\newcommand{\pv}{\mathbf{p}\xspace}

\newcommand{\ve}{\mathbf{v}\xspace}

\newcommand{\avgd}{\overline{d}\xspace}
\newcommand{\avge}{\overline{\delta}\xspace}

\newcommand{\AHE}{AHNEE\xspace}

\newcommand{\nprox}{HMS-N\xspace}
\newcommand{\heprox}{HMS-E\xspace}
\newcommand{\aug}{\texttt{ExtendHG}\xspace}

\newcommand{\pts}{\texttt{PTS}\xspace}

\newcommand{\eigsh}{\texttt{Lanczos}\xspace}
\newcommand{\svds}{\texttt{TruncatedSVD}\xspace}

\newcommand{\pane}{\texttt{PANE}\xspace}
\newcommand{\aneci}{\texttt{AnECI}\xspace}
\newcommand{\conn}{\texttt{CONN}\xspace}
\newcommand{\netmf}{\texttt{NetMF}\xspace}
\newcommand{\lightne}{\texttt{LightNE}\xspace}

\newcommand{\hypertovec}{\texttt{Hyper2vec}\xspace}
\newcommand{\hypeboy}{\texttt{HypeBoy}\xspace}
\newcommand{\tricl}{\texttt{TriCL}\xspace}
\newcommand{\villain}{\texttt{VilLain}\xspace}
\newcommand{\anchorgnn}{\texttt{AnchorGNN}\xspace}

\newcommand{\biane}{\texttt{BiANE}\xspace}

\newcommand{\oursbase}{\texttt{Base}\xspace}
\newcommand{\ours}{\texttt{SAHE}\xspace}

\newcommand{\first}{\cellcolor{gray!45}}
\newcommand{\second}{\cellcolor{gray!30}} %
\newcommand{\third}{\cellcolor{gray!15}}

\newcommand{\stitle}[1]{\vspace{0.6mm}\noindent{\bf #1.}}
\newcommand{\eat}[1]{}
\newcommand{\report}[1]{}
\newcommand{\submission}[1]{#1}
\newcommand{\hardcode}[1]{#1}
\newcommand{\newrow}{\rowstyle{\color{black}}}
 
\newcommand{\revision}[1]{{{#1}}} 
\newcolumntype{B}{>{\color{black}}c} %

\newcommand*{\rowstyle}[1]{%
  \gdef\@rowstyle{#1}%
  \@rowstyle\ignorespaces%
} 

\newcolumntype{=}{%
  >{\gdef\@rowstyle{}}%
}

\newcolumntype{+}{%
  >{\@rowstyle}%
}

\begin{document}
\title{Effective and Efficient Attributed  Hypergraph Embedding on Nodes and Hyperedges}

\author{Yiran Li, Gongyao Guo, Chen Feng, Jieming Shi}\authornote{Corresponding Author.}

\affiliation{%
  \institution{The Hong Kong Polytechnic University}
  \city{Hong Kong SAR}
  \state{China}
}
\email{{yi-ran.li, gongyao.guo}@connect.polyu.hk;  {chenfeng, jieming.shi}@polyu.edu.hk}

\begin{abstract}
An attributed hypergraph comprises nodes with attributes and hyperedges that connect varying numbers of nodes. \textit{Attributed hypergraph node and hyperedge embedding} (\AHE) maps nodes and hyperedges to compact vectors for use in important tasks such as node classification, hyperedge link prediction, and hyperedge classification. Generating high-quality embeddings is challenging due to the complexity of attributed hypergraphs and the need to embed both nodes and hyperedges, especially in large-scale data. Existing solutions often fall short by focusing only on nodes or lacking native support for attributed hypergraphs, leading to inferior quality, and struggle with  scalability on  large attributed hypergraphs.

We propose \ours, an efficient and effective approach that unifies node and hyperedge embeddings for \AHE computation, advancing the state of the art via comprehensive embedding formulations and algorithmic designs.  
First, we introduce two higher-order similarity measures, \nprox and \heprox, to capture similarities between node pairs and hyperedge pairs, respectively. These measures consider multi-hop connections and global topology within an extended hypergraph that incorporates  attribute-based hyperedges. \ours formulates the \AHE objective to jointly preserve all-pair \nprox and \heprox similarities. Direct optimization is computationally expensive, so we analyze and unify core approximations of all-pair \nprox and \heprox to solve them simultaneously. To enhance efficiency, we design several non-trivial optimizations that avoid iteratively materializing large dense matrices while maintaining high-quality results. 
Extensive experiments on diverse attributed hypergraphs and 3 downstream tasks, compared against 11 baselines, show that \ours consistently outperforms existing methods in embedding quality and is up to orders of magnitude faster.

\end{abstract}

\maketitle

\pagestyle{\vldbpagestyle}
\begingroup\small\noindent\raggedright\textbf{PVLDB Reference Format:}\\
\vldbauthors. \vldbtitle. PVLDB, \vldbvolume(\vldbissue): \vldbpages, \vldbyear.\\
\href{https://doi.org/\vldbdoi}{doi:\vldbdoi}
\endgroup
\begingroup
\renewcommand\thefootnote{}\footnote{\noindent
This work is licensed under the Creative Commons BY-NC-ND 4.0 International License. Visit \url{https://creativecommons.org/licenses/by-nc-nd/4.0/} to view a copy of this license. For any use beyond those covered by this license, obtain permission by emailing \href{mailto:info@vldb.org}{info@vldb.org}. Copyright is held by the owner/author(s). Publication rights licensed to the VLDB Endowment. \\
\raggedright Proceedings of the VLDB Endowment, Vol. \vldbvolume, No. \vldbissue\ %
ISSN 2150-8097. \\
\href{https://doi.org/\vldbdoi}{doi:\vldbdoi} \\
}\addtocounter{footnote}{-1}\endgroup

\ifdefempty{\vldbavailabilityurl}{}{
\vspace{.3cm}
\begingroup\small\noindent\raggedright\textbf{PVLDB Artifact Availability:}\\
The source code, data, and/or other artifacts have been made available at \url{https://github.com/CyanideCentral/AHNEE}.
\endgroup
}

\section{Introduction}\label{sec:intro}
An \textit{attributed hypergraph} captures higher-order relationships among a variable number of nodes through \textit{hyperedges}, and the nodes are  often associated with \textit{attribute} information. A hyperedge is a generalized edge that connects more than two nodes.
The unique characteristics of attributed hypergraphs have played important roles in various domains by describing the higher-order relationships among entities, e.g.,   social networks~\cite{bensonSimplicialClosureHigherorder2018}, genomic expression~\cite{fengHypergraphModelsBiological2021}, and online shopping sessions~\cite{hanSearchBehaviorPrediction2023}.
For example, a group-purchase activity of an item links a group of users together, naturally captured via a hyperedge, and the users carry their own profile attributes.

Network embedding is a fundamental problem in graph analytics, garnering attention from both academia~\cite{zhouNetworkRepresentationLearning2022} and industry~\cite{xuUnderstandingGraphEmbedding2021}, and has been studied on various types of simple graphs with  pairwise edge connections, such as homogeneous graphs~\cite{yangHomogeneousNetworkEmbedding2020, qiuLightNELightweightGraph2021, tsitsulinFREDEAnytimeGraph2021a} and attributed graphs~\cite{yangScalingAttributedNetwork2020, wuAttributedNetworkEmbedding2024}. However, network embedding on attributed hypergraphs is still in its early stages, with few native solutions that are efficient and effective in the literature. 

Hence, in this work, we focus on  the problem of \textit{Attributed Hypergraph Node and hyperEdge Embedding} (\AHE).
Given an attributed hypergraph with 
$n$ nodes and 
$m$ hyperedges, \AHE aims to generate compact embedding vectors for each node and hyperedge. Intuitively, node embeddings capture the hyperedge-featured topological and attribute information surrounding nodes, while hyperedge embeddings inherently capture the connections and attribute semantics of groups of nodes around hyperedges. The embeddings are valuable for downstream tasks: node embeddings facilitate node classification~\cite{huangHyper2vecBiasedRandom2019} and hyperedge link prediction~\cite{yuModelingMultiwayRelations2018a}, while hyperedge embeddings support hyperedge classification~\cite{yanHypergraphJointRepresentation2024a}.

It is highly challenging to design  native \AHE solutions, due to the complexity of attributed hypergraphs beyond simple graphs, and the need to simultaneously embed nodes and hyperedges, particularly for large attributed hypergraphs.
Effective node and hyperedge embeddings should capture both local and long-range information via multi-hop paths formed by hyperedges and nodes.
Besides, simply aggregating all node embeddings within a hyperedge to get its hyperedge embedding often results in suboptimal performance.
Incorporating these considerations into \AHE computation requires careful designs to ensure effectiveness and targeted optimizations for efficiency in large attributed hypergraphs.

Existing methods either do not natively support attributed hypergraphs or fail to perform efficiently on massive data.
As reviewed in Section \ref{sec:relatedwork}, an early study~\cite{zhouLearningHypergraphsClustering2007} uses the hypergraph Laplacian spectrum for node embeddings, while~\cite{huangHyper2vecBiasedRandom2019, huangNonuniformHyperNetworkEmbedding2020}  extend   graph-based node embedding to hypergraphs. However, these approaches typically do not consider attribute information or  hyperedge embedding generation, with some, like~\cite{zhouLearningHypergraphsClustering2007}, overlooking long-range connectivity.
A recent class of studies~\cite{wei_augmentations_2022, leeImMeWere2023, kimHypeBoyGenerativeSelfSupervised2023}  has developed hypergraph neural networks, which often incur significant computational overhead when applied to large-scale hypergraph data.
Another way is  converting attributed hypergraphs into bipartite or attributed graphs using star-expansion or clique-expansion, followed by applying graph embedding methods~\cite{yangScalingAttributedNetwork2020,wuBillionScaleBipartiteGraph2023}. However, the expansions dilute the representation of higher-order connections in hyperedges and result in dense graphs with high computational costs.

To tackle the challenges,  we propose \ours, a \underline{S}calable \underline{A}ttributed \underline{H}ypergraph node and hyperedge \underline{E}mbedding method that unifies the generation of node embeddings and hypergraph embeddings with high result quality and efficiency, advancing the state of the art for the problem of \AHE. 
We accomplish this via  comprehensive problem formulations and innovative algorithm designs.

We begin by considering an attribute-extended hypergraph $\HGA$, which integrates node attributes by constructing attribute-based hyperedges with appropriate weights, alongside the original hyperedges from the input attributed hypergraph. Importantly, on $\HGA$, we propose two measures: hypergraph multi-hop node similarity (\nprox) and hypergraph multi-hop hyperedge similarity (\heprox). \nprox captures higher-order connections and global topology between nodes by considering both original and attribute-based hyperedges in $\HGA$.   \heprox quantifies hyperedge similarities, but on a dual hypergraph of $\HGA$, where hyperedges are treated as nodes to preserve their multi-hop connections and global features. 
We then formulate the \AHE task as an optimization problem with the objective to approximate all-node-pair \nprox and all-hyperedge-pair \heprox matrices simultaneously.
Directly achieving this objective can be effective but computationally expensive, with time quadratic in the number of nodes and hyperedges.
To boost efficiency,  \ours unifies the approximations of   \nprox and \heprox matrices by identifying their shared core computations via theoretical analysis. Despite this unification, the process remains costly for calculation and materialization. To further reduce computational overhead, we  develop several optimization techniques that eliminate the need to iteratively materialize large dense matrices, enabling efficient approximation of high-quality node and hyperedge embeddings with guarantees.
We conduct extensive experiments on 10 real datasets, comparing \ours against 11 competitors over 3 tasks. The results show that \ours efficiently generates high-utility node and hyperedge embeddings, achieving superior predictive performance in node classification, hyperedge link prediction, and hyperedge classification tasks, while being up to orders of magnitude faster.

In summary, we make the following contributions in the paper.
\vspace{-\topsep}
\begin{itemize}[leftmargin=*]
    \item We build an attribute-extended hypergraph to incorporate  attribute information into hypergraph structures seamlessly, with a careful design to balance both aspects.
    \item  We design two  similarity measures \nprox and \heprox capturing higher-order connections and global topology of node pairs and hyperedge pairs,  respectively. The \AHE objective is formulated to  preserve all-pair \nprox and \heprox matrices.
    \item We develop several techniques to efficiently optimize the objective, including unifying the shared computations of node and hyperedge embeddings, accurate approximation of the similarity matrices, and avoiding iterative dense matrix materialization.
    \item Extensive experiments on diverse real datasets and 3 downstream tasks demonstrate  the effectiveness and efficiency of our method.

\end{itemize}

\begin{table}[!t]
\centering
\renewcommand{\arraystretch}{1.1}
\begin{footnotesize}
\caption{Frequently used notations.}\vspace{-3mm} \label{tbl:notations}
\resizebox{\columnwidth}{!}{
	\begin{tabular}{|p{0.70in}|p{2.5in}|}
		\hline
		{\bf Notation} &  {\bf Description}\\
		\hline
        $\ahg=\{\VS, \ES, \XM\}$ & An attributed hypergraph $\ahg$ with node set $\VS$, hyperedge set $\ES$, and node attribute matrix $\XM \in \mathbb{R}^{n\times q}$.\\\hline
        $n$, $m$ & The cardinality of $|\VS| = n $, and the cardinality of $|\ES| = m $. \\\hline
        $d(v), \delta(e)$ & The generalized degree of a node $v$ and a hyperedge $e$.  \\\hline
        $\HGA=\{\VS,\ESA\}$ & The extended hypergraph $\HG $ with hyperedge set $\ESA $ that incorporates $\ES$ and attribute-based hyperedges $\ES_K$.\\\hline
        $\vol(\HGA)$ & The volume of the hypergraph $\HGA$.\\\hline
        $\HM$ & The weighted incidence matrix of $\HGA$.\\\hline
        $\DM_v$, $\DM_e$, $\WM$ & The diagonal node degree matrix, hyperedge degree matrix, and hyperedge weight matrix of $\HGA$, respectively.\\\hline
        $\HGA'=\{\VSA', \ES'\}$ & The dual hypergraph of $\HGA$, where nodes in $\VSA'$ represent hyperedges in $\ESA$, and hyperedges in $\ES'$ represent nodes in $\VS$.\\\hline
        $p(v_i,v_j)$ & The random walk transition probability from $v_i$ to $v_j$.\\\hline
        $p_s(v)$ & The random walk stationary probability of node $v$.\\\hline
        $\PM$, $\PM'$ & The transition probability matrices of hypergraphs $\HGA$ and $\HGA'$.\\\hline
        $\pi(v_i,v_j)$ & The probability from $v_i$ to $v_j$ over infinite steps.\\\hline
        $\PiM^{(t)}$,$\PiM^{\prime (t)}$ & The t-step RWR matrix for $\HGA$ and $\HGA'$, respectively. \\\hline
        $\tlog(\cdot), \tlog^\circ(\cdot)$. & $\tlog(x)=\log(\max\{x,1\})$, $\tlog^\circ(\cdot)$ is element-wise $\tlog(\cdot)$.\\\hline
        $\psi(v_i,v_j)$ & \nprox similarity between nodes $v_i$ and $v_j$ of $\HGA$.\\\hline
        $\psi'(e_i,e_j)$ & \heprox similarity between hyperedges $e_i$ and $e_j$ of $\HGA$.\\\hline
        $\PsM$, $\PsM'$ & Similarity matrices for \nprox and \heprox, respectively.\\\hline
        $\ZM_\VS$, $\ZM_\ES$ & The $n\times k$ node embedding matrix and $m\times k$ hyperedge embedding matrix, respectively.\\\hline
	\end{tabular}
}
\end{footnotesize}
\vspace{-2mm}
\end{table}

\section{Preliminaries} \label{sec:prelim}
An attributed hypergraph is denoted as $\ahg=\{\VS, \ES, \XM\}$, where $\VS$ is a set of $n$ nodes, $\ES$ is a set of $m$ hyperedges, and $\XM\in \mathbb{R}^{n\times q}$ is the node attribute matrix.
Each node $v$ in $\VS$ has a $q$-dimensional attribute vector given by the $i$-th row of $\XM$.
A hyperedge $e\in \ES$ is a subset of $\VS$ containing at least two nodes, and a node $v$ is incident to $e$ if $v\in e$.
Figure \ref{fig:attribute_hypergraph} gives an  attributed hypergraph $\ahg$ with six nodes and three hyperedges (e.g., $e_2=\{v_3,v_4,v_5\}$), where each node is associated with an attribute vector. Let $\gamma(v, e)$ be the hyperedge-dependent weight of node $v$ in hyperedge $e$, defaulting to 1 if $v\in e$ and unweighted, or 0 if $v \notin e$.
Each hyperedge $e\in \ES$ carries a weight $w(e)$, defaulting to 1.  The generalized degree of a node $v\in\VS$ is  $d(v)=\sum_{e\in \ES} w(e)\gamma(v, e)$, summing the weighted contributions of $v$ across incident hyperedges. The generalized degree of a hyperedge $e
\in\ES$ is $\delta(e)=\sum_{v\in e}\gamma(v,e)$, aggregating the weight of nodes within $e$. The {volume} of $\ahg$,  $\vol(\ahg)=\sum_{v\in\VS}d(v)$, measures its total weighted connectivity by summing the generalized node degrees.
Let $\HMO\in \mathbb{R}^{m\times n}$ be the incidence matrix of $\ahg$, where each entry $\HMO[i,j]=\gamma(v_j, e_i)$ if $v_j\in e_i$, otherwise $\HMO[i,j]=0$.

\stitle{Node and Hyperedge Embeddings}
For the input $\ahg$, \AHE aims to compute an $n\times k$ embedding matrix $\ZM_\VS$ where each row $\ZM_\VS[i]$ is the embedding vector for node $v_i\in \VS$ (\textit{node embedding}), and also an $m\times k$ embedding matrix $\ZM_\ES$ where each row $\ZM_\ES[j]$ is the embedding vector for hyperedge $e_j\in \ES$ (\textit{hyperedge embedding}).

\stitle{Tasks} 
In the attributed hypergraph $\ahg$, we focus on three 
significant tasks: \textit{node classification} and \textit{hyperedge link prediction} with
node embeddings; \textit{hyperedge classificaiton} with hyperedge embeddings. 
\vspace{-\topsep}
\begin{itemize}[leftmargin=*]
    \item \textit{Node Classification.} For node $v_i$, the goal is to predict its class label by feeding its node embedding $\ZM_\VS[i]$ into a trained classifier.
    \item \textit{Hyperedge Link Prediction.}
    Given nodes $\{v_i,v_j,...,v_k\} \subset\VS$, the task is to use their node embeddings $\{\ZM_\VS[i],\ZM_\VS[j],...,\ZM_\VS[k]\}$ to predict whether these nodes form a hyperedge or not. 
    \item \textit{Hyperedge Classification.} For a hyperedge $e_i$, the goal is to use the hyperedge embedding $\ZM_\ES[i]$ to predict its class label. 
\end{itemize}

Table \ref{tbl:notations} lists the frequently used notations in our paper.

\section{Similarities and Objectives}\label{sec:problemformulation}
As explained in Section \ref{sec:intro}, effective node and hyperedge embeddings in \AHE should capture the closeness among nodes and collective affinities among hyperedges, highlighting higher-order structures and attribute influences.
To this end, we first extend the attributed hypergraph $\ahg$ by adding attribute-based hyperedges, forming an attribute-extended hypergraph $\HGA$, in which, appropriate hyperedge-dependent node weights and hyperedge weights are assigned to balance structural and attribute information in Section \ref{sec:aughg}.
Then in Section \ref{sec:nodeSim}, 
we introduce hypergraph multi-hop node similarity (\nprox) on $\HGA$ to quantify node similarity. This considers multi-hop node and hyperedge connections and node significance in $\HGA$, measured by a random walk model over its topology. Section \ref{sec:edgeSim} designs hypergraph multi-hop hyperedge similarity (\heprox),  formulated similarly but on the dual hypergraph of $\HGA$.
Both \nprox and \heprox are symmetric for evaluating pairwise relationships. Our \AHE objective is to approximate all $n\times n$ node-pair \nprox and all $m\times m$ hyperedge-pair \heprox similarities. Section \ref{sec:basemethod} presents a preliminary method to directly solve this problem. 

\begin{figure}[!t]
    \includegraphics[width=0.67\columnwidth]{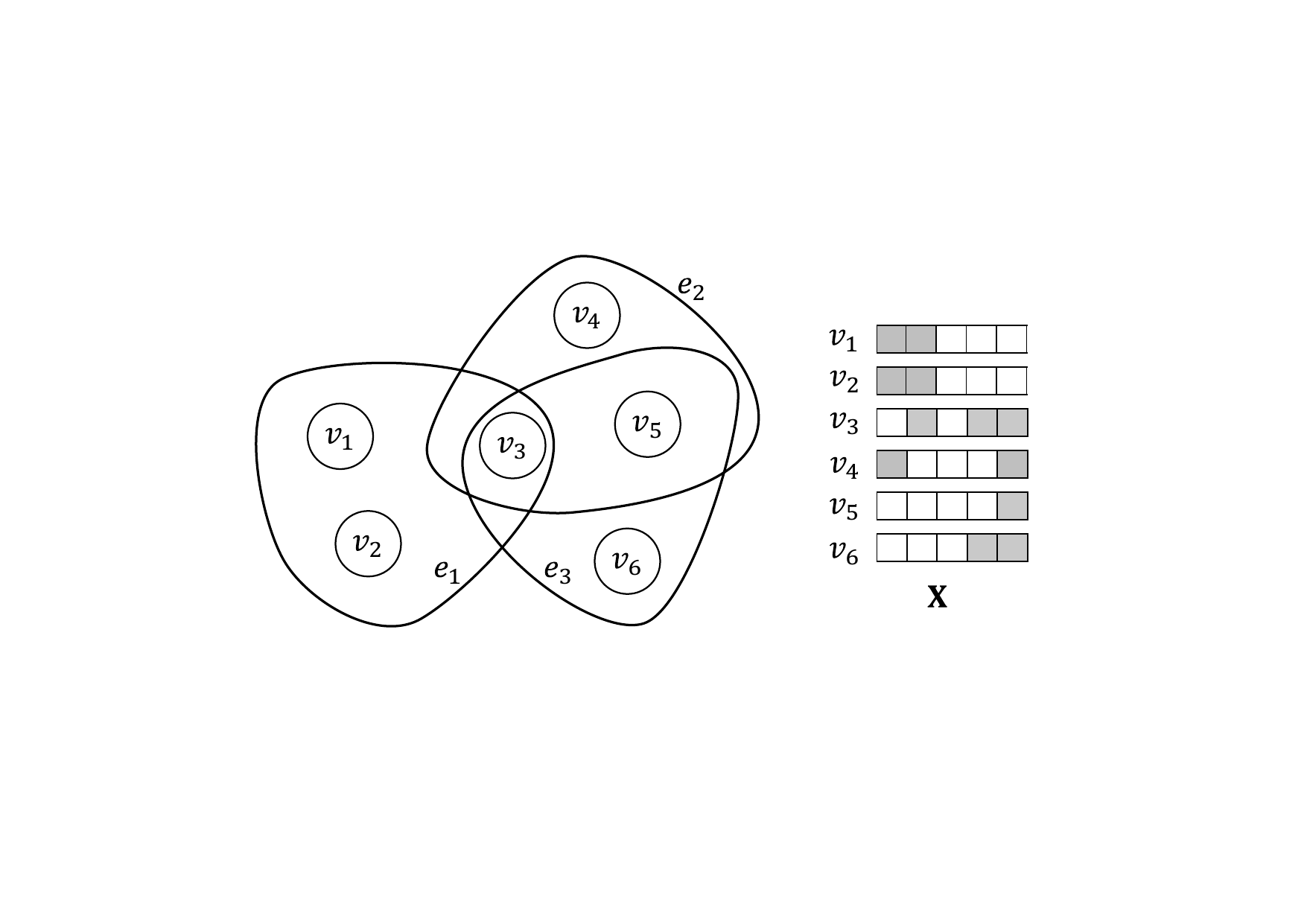}
    \vspace{-3mm}
    \caption{An example of attributed hypergraph $\ahg$.} \label{fig:attribute_hypergraph}
\vspace{-4mm}
\end{figure}

\subsection{Attribute-Extended Hypergraph} \label{sec:aughg}

The literature on attributed data~\cite{liEfficientEffectiveAttributed2023,fengGraphRepresentationAttributed2025}  shows that it is effective for downstream task performance to combine attribute information with network topology, by considering each node's K-nearest neighbors defined by attribute similarity. We adopt this approach to construct an extended hypergraph $\HGA$ from the input $\ahg$, with dedicated designs tailored for hypergraph structures while balancing topological and attribute information. 
Specifically, for a node $v_i$, we first get its local neighbor set $\knn(v_i)$, comprising the top-$K$ most similar nodes $v_j$ ranked by cosine similarity $\cs(v_i,v_j)$ of their attributes.  
\revision{
Then we define an \textit{attribute-based hyperedge} as $\knn(v_i)\cup\{v_i\}$ with $K+1$ nodes. 
Intuitively, it connects nodes with similar attributes into a hyperedge.}

\revision{\stitle{Example} 
Given the hypergraph $\ahg$ in Figure \ref{fig:attribute_hypergraph}, we construct five attribute-based hyperedges $\{e_4,\dots,e_8\}$ for nodes $v_1,\dots,v_5$, respectively. With $K=2$, Figure \ref{fig:augmented-hypergraph3} illustrates two examples: $e_4$, formed by $v_1$ and its two most similar nodes, and $e_8$, formed by $v_5$ and its two most similar nodes. Each hyperedge represents a local neighborhood of nodes with high attribute similarity.
}

\begin{figure}[!t]
    \includegraphics[width=0.9\columnwidth]{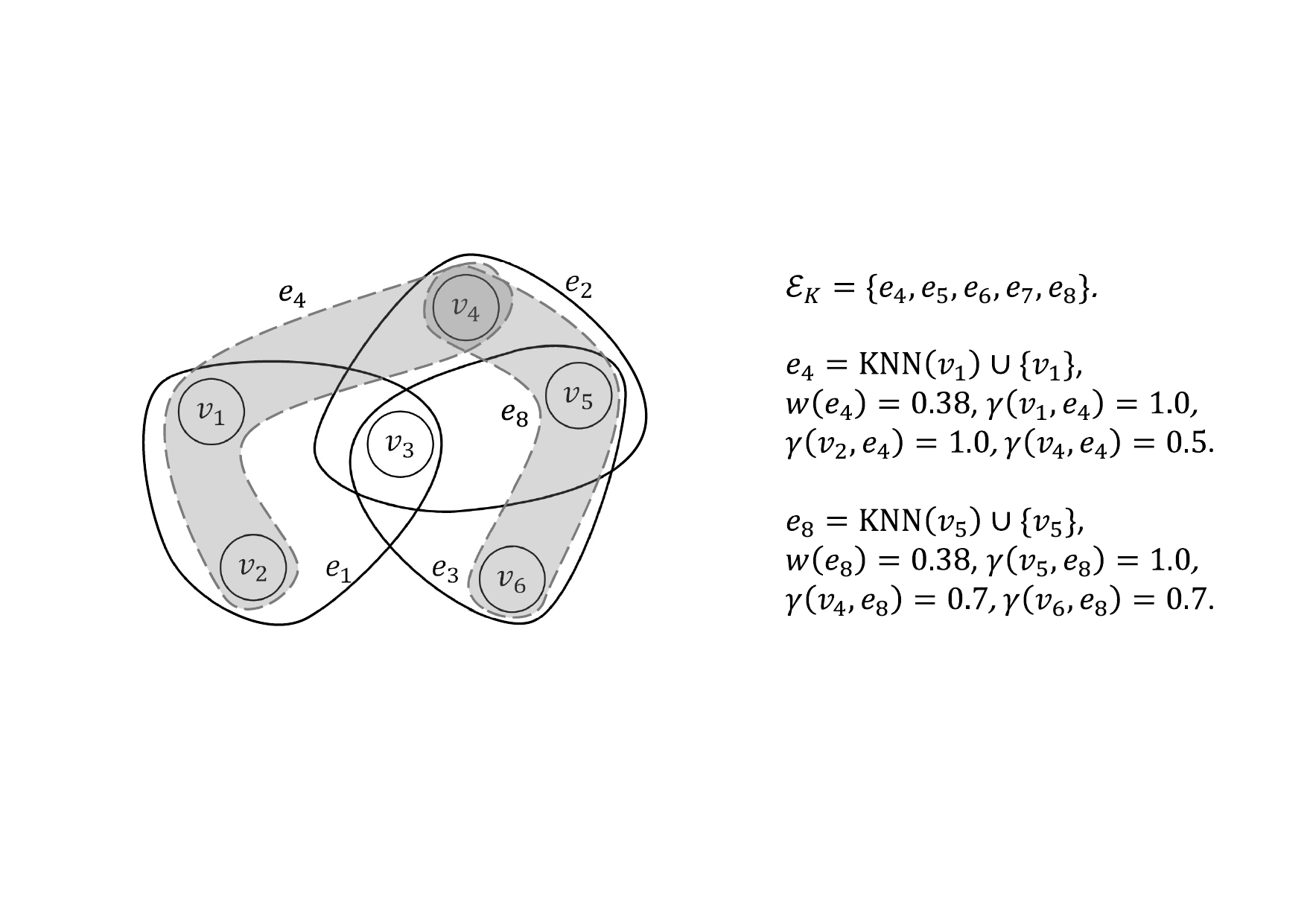}
    \vspace{-2.5mm}
    \caption{Extended hypergraph $\HGA$.}
    \label{fig:augmented-hypergraph3}
    \vspace{-3mm}
\end{figure}

For all $n$ nodes in $\ahg$, we create a set $\ESK$ of  $n$ attribute-based hyperedges, leading to the extended hypergraph $\HGA=\{\VS,\ESA\}$ with hyperedge set $\ESA=\ES\cup\ESK$ of size $m+n$. Unlike nodes in an original hyperedge, the nodes in an attribute-based hyperedge $e$ built from $\knn(v_i)$ require varying weights based on their attribute similarities to $v_i$.
Therefore, we assign \textit{hyperedge-dependent node weights }$\gamma(v_j,e)$ to node $v_j$ in the hyperedge $e$, using its attribute similarity to $v_i$: $\gamma(v_j,e)= \cs(v_i,v_j), \text{ for } v_j\in e \text{ and }  e=\knn(v_i)\cup\{v_i\}$.

$\HGA$ includes 
$m$ hyperedges in $\ES$  and $n$ attribute-based hyperedges in $\ESK$, representing structural and attribute information, respectively. The values of $n$ and $m$ can vary significantly across datasets. For example, the Amazon dataset has about $n=2.27$ million nodes and $m=4.28$ million hyperedges, while MAG-PM has  $n=2.35$ million nodes and $m=1.08$ million hyperedges. A large $n$ may cause attribute-based hyperedges to overshadow the original topology, and vice versa.
This disparity
extends to their volumes $\vol(\ES)=\sum_{e\in\ES}w(e)\delta(e)$, which can influence similarity measures by skewing transitions to $\ES$ or $\ESK$, affecting embedding priorities. Thus, we balance structural and attribute aspects in $\HGA$.  
We achieve this by adjusting hyperedge weights in $\ESK$, balancing the volumes of $\ES$ and $\ESK$.
For $\ES$, the volume is $\textstyle\vol\left(\ES\right)=\sum_{e\in \ES}|e|$, when $\gamma(v,e)$ and $w(e)$ are with unit weights.
For  $\ESK$, the volume is $\vol(\ESK)=w(e)\sum_{e\in\ES_K}\sum_{v\in e}\gamma(v,e)$, where hyperedge-dependent node weight $\gamma(v,e)$ is assigned ahead.
We enforce: 
\begin{equation} \textstyle
\textstyle\beta\vol\left(\ES\right)=\vol\left(\ESK\right),    
\end{equation}
where parameter $\beta$ controls the balance on structure versus attributes, and the default value is 1. Then we get a uniform weight  $w(e)$ of each attribute-based hyperedge $e\in\ESK$ by
\begin{equation} \label{eq:knn-edge-weight} \textstyle
    \textstyle w(e)=\beta\vol(\ES)\Big/\sum_{e\in\ES_K}\sum_{v\in e}\gamma(v,e), \forall  e\in\ESK. 
\end{equation}

The extended hypergraph $\HGA=\{\VS,\ESA\}$  has an $(m+n)\times n$ weighted incidence matrix $\HM$, where $\HM[i,j]=\gamma(v_j, e_i)$.
The first $m$ rows of $\HM$ are from the incidence matrix $\HM_0$, and the last $n$ rows are from the attribute-based hyperedges, forming a submatrix $\HM_K$. Then, the matrix $\HM$ can be written as $\HM=\left[\begin{smallmatrix}\HM_0\\\HM_K\end{smallmatrix}\right]$, where columns corresponding to the nodes in $\VS$.  For $\HGA$, let $\WM\in \mathbb{R}^{(m+n)\times (m+n)}$ be the diagonal matrix of hyperedge weights. 
In addition, $\DM_v\in \mathbb{R}^{n\times n}$ is the node degree matrix, with $\DM_v[i,i]=d(v_i)$ representing  the generalized degree of node $v_i$ in $\HGA$, and $\DM_e\in \mathbb{R}^{(m+n)\times (m+n)}$ is the hyperedge degree matrix with $\DM_e[i,i]=\delta(e_i)$ representing the generalized degree of hyperedge $e_i$ in $\HGA$.

The above idea of constructing $\HGA$ from $\ahg=\{\VS,\ES,\XM\}$ is summarized in Algorithm \ref{alg:augment}. Lines 1-4 generate $n$ attribute-based hyperedges $e_i=\knn(v_i)\cup \{v_i\}$, each capturing a node’s $K$-nearest neighbors based on attribute similarity from $\XM$ (Line 2).  Lines 3-4 assign hyperedge-dependent node weights  $\gamma(v_j, e_i)$ in  $\HM_K$.  Line 5 constructs the hyperedge weight matrix $\WM$ as a diagonal matrix from concatenated weight vectors (denoted by $\|$). The $m$ hyperedges in $\ES$ have weights of 1 via $\mathbf{1}_m$, a length-$m$ vector of ones, while the $n$ attribute-based hyperedges receive a weight per Eq. \eqref{eq:knn-edge-weight}, with $\mathbf{1}_m\HMO\mathbf{1}_n$ representing $\vol(\ES)$.
Line 6 builds the $(m+n)\times n$ weighted incidence matrix $\HM$ by stacking $\HMO$ and $\HM_K$, and computes the diagonal degree matrices $\DM_v$ and $\DM_e$ of $\HGA$.  The algorithm returns $\HM$, $\DM_v$, $\DM_e$, and $\WM$, representing $\HGA$. 
Lines 1-4 require $O(n\log n+nqK)$ time, leveraging efficient KNN queries (\eg~\cite{douze2024faiss}), where $q$ is the attribute dimension. Lines 5-6 operate in $O(nK+n\avgd)$ time and space, proportional to $\HM$’s nonzero entries, with $\avgd$ as the average hyperedge incidences per node in $\ahg$. Thus, Algorithm \ref{alg:augment} achieves log-linear time complexity and linear space complexity for generating $\HGA$.

\begin{algorithm}[!t]
\caption{\texttt{ExtendHG}} \label{alg:augment} 
\small
\KwIn{Attributed hypergraph $\ahg=\{\VS, \ES, \XM\}$, parameter $K$
}

\For{each $v_i\in\VS$}{ 
    Attribute-based hyperedge $e_i\gets \knn(v_i)\cup\{v_i\}$\;
   \For{each $v_j\in e_i$}{ 
    $\HMK[i,j]\gets\gamma(v_j, e_i)$\;
    }
}
$\WM\gets\diag\left(\mathbf{1}_m \Vert \frac{\mathbf{1}_m\transpose\HM_0\mathbf{1}_n}{\mathbf{1}_n\transpose\HM_K\mathbf{1}_n}\mathbf{1}_n\right)$\;
  $\HM\gets\left[\begin{smallmatrix}\HM_0\\\HM_K\end{smallmatrix}\right],\ \DM_v\gets\diag\left(\HM\transpose\WM\mathbf{1}_{m+n}\right),\ \DM_e\gets\diag\left(\HM\mathbf{1}_{n}\right)$\;
\Return Extended hypergraph $\HGA$ (i.e., $\HM$, $\DM_v$, $\DM_e$, $\WM$)\; %
\end{algorithm}

\subsection{Hypergraph Multi-Hop Node Similarity: \nprox}\label{sec:nodeSim}

Intuitively, the resultant node embeddings should capture the complex relationships between nodes, preserving both structural and attribute similarities across the extended hypergraph $\HGA$.
This is challenging due to the need to model higher-order connections in hyperedges while integrating the hypergraph's global topology. Common similarity measures like Personalized PageRank ~\cite{yangEfficientAlgorithmsPersonalized2024} effectively capture node significance but are limited to pairwise interactions, not applicable to hypergraphs. Other hypergraph definitions~\cite{takaiHypergraphClusteringBased2020a,ameranisFastAlgorithmsHypergraph2024} consider submodular hyperedges, which are unnecessary for embeddings and incur expensive all-pair computations.

\revision{
    To address these challenges, we propose a hypergraph multi-hop similarity measure for nodes (\nprox) over $\HGA$.
    The key insights include  (i) capturing multi-hop connectivity between nodes and (ii) leveraging the global significance of nodes in $\HGA$, both relying on random walks adapted to the hypergraph structure.
}

\stitle{\nprox Formulation} We begin by defining the transition probability  $p(u,v)$   for random walks on $\HGA$, considering hyperedge sizes, weights $w(e)$, generalized node degrees $d(u)$, and, crucially, hyperedge-dependent node weights $\gamma(u,e)$, reflecting attribute similarities. 
$p(u,v)$ involves two hops: from node $u$ to a hyperedge and from the hyperedge to node $v$. First, unlike prior definitions~\cite{chitraRandomWalksHypergraphs2019a}, an incident hyperedge $e$ is selected with probability proportional to $w(e)\gamma(u,e)/d(u)$, where $\gamma(u,e)$ emphasizes attribute affinity, and $w(e)$   balances structural and attribute significance. Second, within the chosen hyperedge $e$, the node $v$ is selected with probability proportional to $\gamma(v,e)/\delta(e)$, prioritizing nodes with stronger attribute  ties.
Therefore, the transition probability is 
\begin{equation}\label{eq:weighted-hgrw} 
\textstyle
    p(u, v) = \sum_{e\in \ESA} \frac{w(e)\gamma(u, e)}{d(u)} \frac{\gamma(v, e)}{\delta(e)},
\end{equation}
and accordingly the $n\times n$ transition matrix $\PM$ with each entry $\PM[i,j]=p(v_i,v_j)$ can be written as
\begin{equation}\label{eq:hgrw-matrix} \textstyle
    \PM = \DM_v^{-1}\HM\transpose \WM \DM_e^{-1} \HM.
\end{equation}

First, to capture the multi-hop connectivity between nodes, we use the random walk with restart (RWR) model.
Specifically, in the extended hypergraph $\HGA$, at each step, the walk either teleports back to $u$ with probability $\alpha\in[0,1)$ or transitions to a node with probability $1-\alpha$, following the transition matrix $\PM$ in Equation \eqref{eq:hgrw-matrix}. Let $\pi(v_i, v_j)$ denote the limiting probability that a random walk starting from node $v_i$ reaches node $v_j$ after infinitely many iterations, reflecting $v_j$'s significance to $v_i$ across local and global levels.
The probability $\pi^{(t)}(v_i, v_j)$ of reaching any node $v_j$ from any node $v_i$ after $t$ steps is represented by the stochastic matrix $\PiM^{(t)} \in\RN^{n\times n}$ in Equation \eqref{eq:rwr-recursive},  where $\PiM^{(t)}[i,j]$ is $\pi^{(t)}(v_i, v_j)$. 
\begin{equation}\label{eq:rwr-recursive} \textstyle
    \PiM^{(0)}=\IM_n,\  
    \PiM^{(t+1)} = \alpha \IM_n + (1-\alpha) \PiM^{(t)}\PM,
\end{equation}
with $\IM_n$ as the $n\times n$ identity matrix. The non-recursive formula is 
\begin{equation}\label{eq:rwr-formula} \textstyle
    \PiM^{(t)} = \textstyle{\sum_{i=0}^{t-1} \alpha(1-\alpha)^{i} \PM^i + (1-\alpha)^{t}\PM^t}.
\end{equation}

Let $\PiM$ represent $\PiM^{(t=\infty)}$ with infinite steps to converge. As one may note, Equations \eqref{eq:rwr-recursive} and \eqref{eq:rwr-formula} have similar forms to those in simple graphs. However, our formulation is based on a derivation tailored to the extended hypergraph $\HGA$. Accordingly, the transition includes the selection of a hyperedge and a node therein. Moreover, we do not employ $\PiM$ as the similarity matrix, but instead take into account the significance of each specific node as below.

Second, note that in a connected and nontrivial $\HGA$, the transition matrix $\PM$ is irreducible and aperiodic, ensuring a unique stationary distribution $\pv_s=\pv_s\PM$~\cite{zhouLearningHypergraphsClustering2007}. Let $p_s(v)$ be the element in $\pv_s$ w.r.t. node $v$. Then, $p_s(v)$ is the probability that an arbitrary random walk ends at $v$, indicating the significance of $v$. Moreover, the significance $p_s(v)$ can be calculated by $p_s(v)=d(v)/\vol(\HGA)$.

Finally, combining the multi-hop connectivity and node significance, we define the \nprox similarity measure as 
\begin{equation}\label{eq:def_psi}  \textstyle
    \psi(u,v)=\tlog\frac{\pi(u,v)}{p_s(v)},
\end{equation}
where $\pi(u,v)$ is divided by $p_s(v)$ to offset the inherent global significance of $v$ (\ie its degree centrality), and thus isolate the specific relational strength between $u$ and $v$ for a balanced and embedding-friendly measure. Moreover, the truncated logarithm, $\tlog x=\log(\max(x,1))$, is applied to stabilize the ratio against small $p_s(v)$ in large hypergraphs.

By Lemma \ref{lm:symmetry} (\submission{proof in the technical report~\cite{report}}\report{proof in Appendix \ref{app:proofs}}), we establish the symmetry of \nprox, ensuring the mutual similarity between two nodes. This is critical for embedding, as the dot product of corresponding embedding vectors should preserve their mutual \nprox similarity.
\begin{lemma}\label{lm:symmetry}
    For any nodes $v_i,v_j\in\VS$, we have $\frac{\pi(v_i, v_j)}{p_s(v_j)}=\frac{\pi(v_j, v_i)}{p_s(v_i)}$.
\end{lemma}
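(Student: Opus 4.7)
My plan is to establish the result via detailed balance (reversibility) of the underlying hypergraph random walk, then propagate that symmetry through the power series defining $\PiM^{(t)}$ and take the limit.

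First, I would verify detailed balance for the one-step transition matrix $\PM$. Using $p_s(v)=d(v)/\vol(\HGA)$ together with Equation \eqref{eq:weighted-hgrw}, I would compute
\begin{equation*}
p_s(u)\,p(u,v)=\frac{1}{\vol(\HGA)}\sum_{e\in\ESA}\frac{w(e)\,\gamma(u,e)\,\gamma(v,e)}{\delta(e)},
\end{equation*}
and observe that the right-hand side is manifestly symmetric in $u$ and $v$. Hence $p_s(u)p(u,v)=p_s(v)p(v,u)$. In matrix form this reads $\DM_v\PM=\PM\transpose\DM_v$, which follows directly from Equation \eqref{eq:hgrw-matrix} and the symmetry of $\HM\transpose\WM\DM_e^{-1}\HM$.

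Second, I would lift detailed balance to every power of $\PM$. From $\DM_v\PM=\PM\transpose\DM_v$, a straightforward induction on $t$ gives $\DM_v\PM^t=(\PM^t)\transpose\DM_v$, equivalently $p_s(u)\,p^{(t)}(u,v)=p_s(v)\,p^{(t)}(v,u)$ for every $t\geq 0$. Plugging this into the closed form Equation \eqref{eq:rwr-formula} term by term, I get $\DM_v\PiM^{(t)}=(\PiM^{(t)})\transpose\DM_v$ for every finite $t$, so that $p_s(u)\,\pi^{(t)}(u,v)=p_s(v)\,\pi^{(t)}(v,u)$.

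Finally, I would let $t\to\infty$. Since $\alpha\in[0,1)$ the series defining $\PiM$ converges entrywise, so taking limits on both sides of the finite-$t$ identity yields $p_s(u)\,\pi(u,v)=p_s(v)\,\pi(v,u)$, which is exactly the claim after dividing by $p_s(u)p_s(v)$.

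\textbf{Main obstacle.} The proof is essentially a detailed-balance calculation, so there is no deep obstacle, but the one point that requires care is the matrix-form bookkeeping: \emph{a priori} $\PM=\DM_v^{-1}\HM\transpose\WM\DM_e^{-1}\HM$ is not symmetric, and one must multiply on the left by $\DM_v$ to expose the symmetric core $\HM\transpose\WM\DM_e^{-1}\HM$ before the reversibility argument goes through. Once that identification is made, extending from $\PM$ to $\PM^t$, then to the RWR series $\PiM^{(t)}$, and finally to the limit $\PiM$ is routine.
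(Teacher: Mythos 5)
Your proposal is correct and is essentially the paper's own argument: the detailed-balance identity $\DM_v\PM=\PM\transpose\DM_v$ is exactly the paper's observation that $\PM=\DM_v^{-1}(\HM\transpose\WM\DM_e^{-1}\HM)$ has a symmetric core, and both proofs then push the resulting commutation $\DM_v^{-1}(\PM^i)\transpose=\PM^i\DM_v^{-1}$ through the RWR series to conclude that $\PiM\DM_v^{-1}$ is symmetric. The only difference is presentational (your probabilistic reversibility framing versus the paper's direct matrix manipulation), so no further comparison is needed.
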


With Eq. \eqref{eq:def_psi}, we can express the \nprox matrix for all $n\times n$ node pairs in $\HGA$ as
\begin{equation}\label{eq:similarity-matrix} \textstyle
    \PsM = \tlog\ewise\left(\vol(\HGA)\PiM\DM_v^{-1}\right),
\end{equation}
where $\PsM[i,j]=\psi(v_i,v_j)$ and $\tlog^\circ(\cdot)$ means the element-wise truncated logarithm. Also, $\PsM$ is a symmetric matrix, as the element-wise $\tlog\ewise(\cdot)$ function preserves the symmetry established in Lemma \ref{lm:symmetry}. 

\stitle{Node Embedding Objective}
We aim to use the dot product of two node embeddings to preserve the \nprox between nodes. Specifically, let $\ZM_\VS$ denote the $n\times k$ embedding matrix where each row is a node embedding. Then, the node embedding problem of solving $\ZM_\VS$ is formulated as follows, where $\|\cdot\|_F$ is the Frobenius norm.
\begin{equation}\label{eq:nodeObj}  \textstyle
    \ZM_\VS=\argmin_{\ZM\in\RN^{n\times k}}\|\PsM-\ZM\ZM\transpose\|_F^2.
\end{equation}

\begin{figure}[!t]
\centering
\includegraphics[width=0.8\columnwidth]{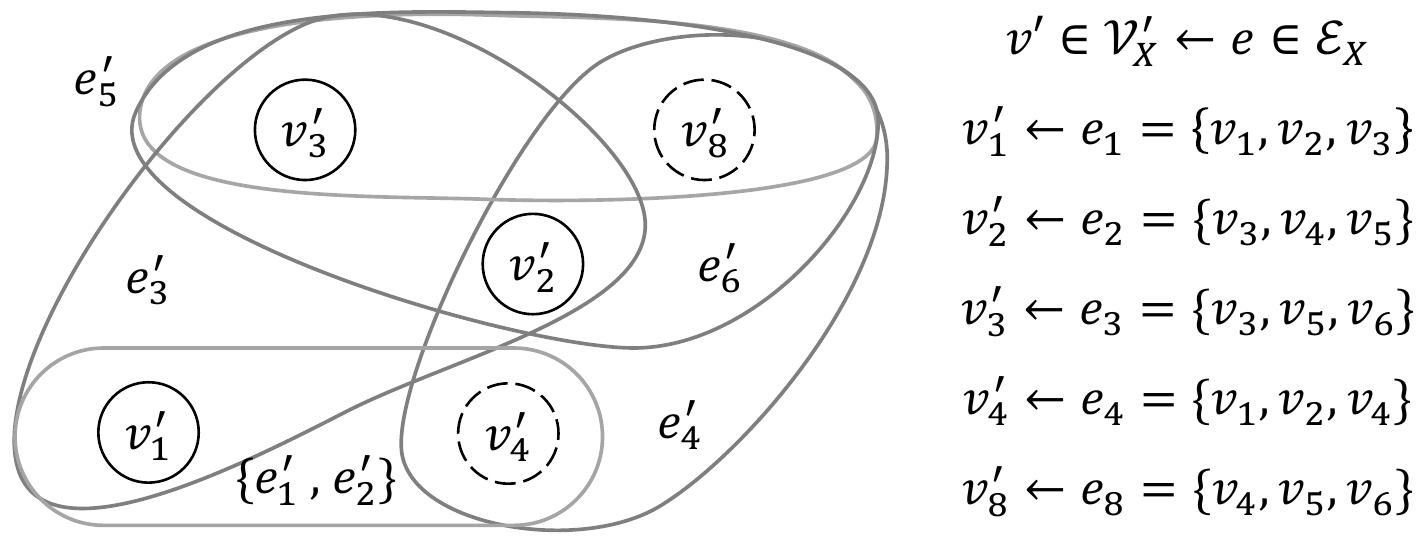}
\vspace{-3mm}
\caption{Dual hypergraph $\HGA'$.} \label{fig:dual_hypergraph}
\vspace{-4mm}
\end{figure}

\subsection{Hypergraph Multi-Hop Hyperedge Similarity: \heprox}\label{sec:edgeSim}

\revision{
    The definition of \heprox aligns with the principles of \nprox, but it is derived using the dual hypergraph $\HGA'$ of $\HGA$, where nodes and hyperedges swap their roles.
     $\HGA'$ can be obtained by transposing the incidence matrix $\HM$.
}
In $\HGA'=\{\VSA',\ES'\}$, the new node set $\VSA'$ includes $m+n$ nodes and the new hyperedge set $\ES'$ has $n$ hyperedges. Here, each node $v_i'\in\VSA'$ represents the hyperedge $e_i\in\ESA$ of $\HGA$. Conversely, each hyperedge $e_j'\in\ES'$ contains nodes in $\VSA'$ which correspond to the hyperedges in $\HGA$ incident to $v_j\in\VS$. 

\revision{\stitle{Example} Figure \ref{fig:dual_hypergraph} illustrates the dual hypergraph $\HGA'$ derived from the original $\HGA$ in Figure \ref{fig:augmented-hypergraph3}.
In $\HGA'$, hyperedge $e_1$ of $\HGA$ is represented by node $v_1'$ and is connected to $v_2'$ and $v_3'$ through $e_3'$, which corresponds to node $v_3$ in $\HGA$.
This representation enables the analysis of similarity between $e_1$ and $e_2$ in $\HGA$ based on the relationships between $v_1'$ and $v_2'$ in $\HGA'$.}

The dual hypergraph $\HGA'$ has a weighted incidence matrix $\HM'=\HM\transpose\WM$, hyperedge weight $w(e')=1$ ($e'\in\ES'$), and $\WM'=\IM_n$. Hyperedge-dependent node weights are $\gamma'(v_i', e_j')=w(e_i)\gamma(v_j, e_i)$, retaining the influence of hyperedge weights in $\HGA$.
The generalized node degree, hyperedge degree, and hypergraph volume of $\HGA'$ are:
\begin{equation*} \textstyle
    d(v_i')=\delta(e_i)w(e_i), \delta(e_j')=d(v_j),\vol(\HGA')=\vol(\HGA),
\end{equation*} 
with matrix forms $\DM_v'=\WM\DM_e$ and $\DM_e'=\DM_v$.

\stitle{\heprox Formulation}  
In the dual hypergraph $\HGA'$, a random walk transitions from node $v_i'$ (hyperedge $e_i$ in $\HGA$) to node $v_j'$ (hyperedge $e_j$ in $\HGA$) via a hyperedge $e'$ (node $v$ shared by $e_i$ and $e_j$), with probability proportional to $\gamma'(v_i',e')\gamma'(v_j',e')/d(v_i')$, which is $\gamma(v,e_i)\gamma(v,e_j)/[\delta(e_i)w(e_i)]$ in $\HGA$. Aggregating over all shared nodes, the transition probability between $e_i$ and $e_j$ effectively captures the strength of their overlap. 
The transition probability between hyperedges $e_i, e_j$ in the original hypergraph $\HGA$ is 
\begin{equation} \textstyle
    p'(e_i,e_j)=p(v_i',v_j')=\sum_{v\in \VS} \frac{\gamma(v, e_i)}{\delta(e_i)w(e_i)} \frac{\gamma(v, e_j)}{d(v)},
\end{equation}
where the function $p'(\cdot,\cdot)$ with a prime indicates the transition between hyperedges in the original hypergraph $\HGA$. Moreover, the $(m+n)\times(m+n)$ transition matrix $\PM'$ can be expressed as
\begin{equation}\label{eq:hgrw-dual} \textstyle
    \PM'=(\DM_e\WM)^{-1}(\HM\transpose\WM)\transpose\DM_v^{-1}\HM\transpose\WM = \DM_e^{-1}\HM\DM_v^{-1}\HM\transpose\WM.
\end{equation}

Similar to \nprox, \heprox between hyperedges $e_i$ and $e_j$  (i.e., nodes $v_i'$ and $v_j'$ in $\HGA'$ ) also considers their multi-hop connectivity and  global significance within $\HGA'$.
Let $\PiM^{\prime (t)}$ be the probability of transitioning between hyperedges (represented as nodes in $\HGA'$) over $t$ steps.
In the limit, $\PiM^{\prime (\infty)}$ reflects the long-term likelihood of reaching one hyperedge from another. The matrix $\PiM^{\prime (t)}$ for the dual hypergraph is computed iteratively by substituting $\PM'$ into Eq. \eqref{eq:rwr-recursive}.
Let $\PiM^\prime$ denote $\PiM^{\prime(t=\infty)}$ with infinite steps to converge.
Then, the probability that a random walk from hyperedge $e_i$ reaches hyperedge $e_j$ in $\HGA$ is $\pi'(e_i,e_j)=\pi(v_i', v_j')=\PiM'[i,j]$. 

The global significance of hyperedge $e_j$ is the significance of its corresponding node $v_j'$ in $\HGA'$, $p_s(v'_j)$, calculated as :
\begin{equation} \textstyle
p_s'(e_j)=p_s(v_j')=d(v_j')/\vol(\HGA')=\delta(e_j)w(e_j)/\vol(\HGA).
\end{equation}
Finally, the \heprox of hyperedges $e_i$ and $e_j$, $\psi'(e_i, e_j)$, is 
\begin{equation} \textstyle
    \psi'(e_i, e_j)=\tlog\frac{\pi'(e_i,e_j)}{p_s'(e_j)}.
\end{equation}
The corresponding \heprox matrix for all hyperedge pairs is 
\begin{equation}\label{eq:similarity-matrix-dual} \textstyle
\PsM' = \tlog\ewise\left(\vol(\HGA)\PiM'\DM_e^{-1}\WM^{-1}\right).
\end{equation}

\stitle{Hyperedge Embedding Objective}
We aim to use the dot product of two hyperedge embeddings to preserve the \heprox between the hyperedges. 
Let $\ZM_\ES$ denote the $m\times k$ embedding matrix where each row is the embedding vector for a hyperedge $e\in\ES$, and $\PsM'_\ES$ denote the $m\times m$ submatrix of $\PsM'$ that represents the \heprox similarity between hyperedges in $\ES$. Then, the hyperedge embedding problem of solving $\ZM_\ES$ can be formulated as
\begin{equation}\label{eq:edgeObj} \textstyle
\ZM_\ES=\argmin_{\ZM\in\RN^{m\times k}}\|\PsM'_\ES-\ZM\ZM\transpose\|_F^2.
\end{equation}

\subsection{A Base Method} \label{sec:basemethod}
 
In this section, we introduce a basic method to directly address the node and hyperedge embedding objectives in Equations \eqref{eq:nodeObj} and \eqref{eq:edgeObj}. The purpose of   this base method is two-fold. First, it verifies the effectiveness of the proposed measures \nprox and \heprox, by demonstrating superior quality over existing methods. Second, it establishes a foundation for our final method \ours, described in Section \ref{sec:finalmethod}, which achieves comparably high embedding quality with significantly improved efficiency.

For node embeddings, the main idea  is to factorize the \nprox matrix $\PsM$. Recall that  $\PsM$ in Eq. \eqref{eq:similarity-matrix} relies on $\PiM^{(t=\infty)}$ in Eq. \eqref{eq:rwr-recursive}, which is an infinite sum of powers of the transition matrix $\PM$. To be tractable, we approximate $\PiM$ by at most $t=T$ steps, resulting in $\PiM^{(T)}$ by Eq. \eqref{eq:rwr-formula}. %
Accordingly, we can derive the approximate \nprox matrix $\PsM_T$ by replacing $\PiM$ with $\PiM^{(T)}$ in Eq. \eqref{eq:similarity-matrix}. 
Now, the focus is to factorize the symmetric matrix $\PsM_T$ to get node embeddings $\ZM_\VS\in\RN^{n\times k}$.
Specifically, we utilize the eigendecomposition $\PsM_T= \QM \LdM \QM\transpose$, where $\LdM$ is the diagonal matrix of $n$ eigenvalues, and $\QM$ contains the corresponding eigenvectors in its columns. 
Then, the embedding matrix $\ZM_\VS$ would be $\QM\LdM^{1/2}$, so that $\ZM_\VS\ZM_\VS\transpose$ approximates $\PsM$ in Eq. \eqref{eq:nodeObj}. Note that, $\ZM_\VS$ has only $k$ columns. To satisfy this, we only take the $k$ leading eigenvalues (forming $\LdM_\Psi$), and let $\QM_k$ contain the corresponding eigenvectors. Then, we get the node embeddings   
\begin{equation}\label{eq:ZV} \textstyle
\ZM_\VS=\QM_k \LdM_\Psi^{1/2}, \textrm{ where }\QM_k\in\RN^{n\times k},\ \LdM_\Psi\in\RN^{k\times k}.
\end{equation}

Similarly, to derive the hyperedge embeddings $\ZM_\ES$  in Eq. \eqref{eq:edgeObj}, the base method  first gets $\PiM^{\prime (T)}$ with at most $T$ steps, and then computes the approximate \heprox matrix $\PsM_T'$.
A note is that we are only interested in deriving embeddings for hyperedges in $\ES$, while the attribute-based hyperedges in $\ES_K$ are constructed just to incorporate the attributes. Thus, we only focus on factorizing the $m\times m$ part of $\PsM'$ (denoted as $\PsM_\ES'$). Although these  attribute-based hyperedges are excluded from $\PsM_\ES'$, the attribute information is actually taken into account in $\PsM_\ES'$ via the multi-hop random walk. Then, after factorizing  $\PsM_\ES'=\QM' \LdM' \QM^{\prime\transpose}$, we take the first $k$ leading eigenvalues (forming $\LdM_\Psi'$) and the corresponding eigenvectors (forming $\QM_k'$) to fit the dimension of embeddings $k$. Finally, we can derive the hyperedge embeddings $\ZM_\ES=\QM_k'\LdM_\Psi^{\prime 1/2}$.

\begin{algorithm}[!t]
\caption{\oursbase} \label{alg:base} 
\small
\KwIn{Hypergraph incidence matrix $\HM_0\in\RN^{m\times n}$ and attribute matrix $\XM\in\RN^{n\times q}$, embedding dimension $k$, $K, \alpha, T$.}
 $\HM, \DM_v, \DM_e, \WM\gets\aug(\HM_0,\XM,K)$\; 
 $\PM\gets\DM_v^{-1}\HM\transpose \WM \DM_e^{-1} \HM,\ \PiM^{(0)}\gets\IM_n$ \tcp*{Eq. \eqref{eq:hgrw-matrix}}
 \For{$t\gets 1, \dots, T$}{ 
  $\PiM^{(t)} \gets \alpha \IM_n + (1-\alpha) \PiM^{(t-1)}\PM$ \tcp*{Eq. \eqref{eq:rwr-recursive}}
  }
 $\PsM_T \gets \tlog\ewise\left(\vol(\HGA)\PiM^{(T)}\DM_v^{-1}\right)$ \; 
 $\LdM_\Psi, \QM_k \gets \eig(\PsM_T, k)$\;
 $\ZM_\VS\gets\QM_k\LdM_\Psi^{1/2}$ \tcp*{Eq. \eqref{eq:ZV}}
 $\PM'\gets\DM_e^{-1}\HM\DM_v^{-1}\HM\transpose\WM,\ \PiM^{\prime(0)}\gets\IM_{m+n}$\;
 \For{$t\gets 1, \dots, T$}{ 
  $\PiM^{\prime(t)} \gets \alpha \IM_{m+n} + (1-\alpha) \PiM^{\prime(t-1)} \PM'$\;
  }
 $\PsM_T' \gets \tlog\ewise\left(\vol(\HGA)\PiM^{\prime(T)}\DM_e^{-1}\WM^{-1}\right)$\;
 $\PsM_\ES'\gets\PsM_T'[1:m+1, 1:m+1]$\;
 $\LdM_\Psi', \QM_k' \gets\eig(\PsM_\ES', k)$\;
 $\ZM_\ES\gets\QM_k'\LdM_\Psi^{\prime1/2}$\;
\Return $\ZM_\VS$, $\ZM_\ES$\; %
\end{algorithm}

The pseudocode of this method is presented in Algorithm \ref{alg:base}. After constructing the extended hypergraph $\HGA$ by Line 1, \oursbase first simulates the random walk  processes on $\HGA$ for $T$ iterations in Lines 2-4, leading to  $\PiM^{(T)}$. The approximate \nprox matrix $\PsM_T$ is computed in Line 5, and the node embedding matrix $\ZM_\VS$ is derived by factorizing $\PsM_T$ in Lines 6-7. For the eigendecomposition in Line 6, we adopt an implementation based on Lanczos iterations, which solves the leading eigenpairs via a limited number of matrix-vector multiplications. Then, Lines 8-14 basically repeat the embedding procedures on the dual hypergraph $\HGA'$ to acquire the hyperedge embeddings $\ZM_\ES$, except that Line 12 removes the last $n$ columns and rows to exclude the attribute-based hyperedges from $\HM$. 

In the experiments, \oursbase demonstrates strong effectiveness, but falls short in scalability. To analyze, Lines 3-4 and Lines 9-10 dominate its time complexity, with time $O(n^2\avgd^2)$ and $O((m+n)^2\avgd^2)$, respectively. Lines 5-6 and Lines 11-13 also incur quadratic time while handling $\PsM_T$ and $\PsM_\ES$. 
Thus, \oursbase has an overall time complexity of $O((m+n)^2\avgd^2)$, and the space complexity is $O((m+n)^2)$, due to the materialization of the $(m+n)\times(m+n)$ matrix $\PiM'$ and the $n\times n$ matrix $\PiM$.
The high complexity stems from the element-wise $\tlog^\circ(\cdot)$ function, which prevents separate factorization of $\PiM$ and $\DM^{-1}$, forcing materialization of $\PiM^{(t)}$ for $t\in [1, T]$. Moreover, $\PiM^{(t)}$ and $\PiM^{\prime (t)}$ grow dense after iterations, exacerbating scalability issues. To overcome these limitations, we design \ours in Section \ref{sec:finalmethod}.

\begin{figure}[!t]
\centering
\includegraphics[width=1\columnwidth]{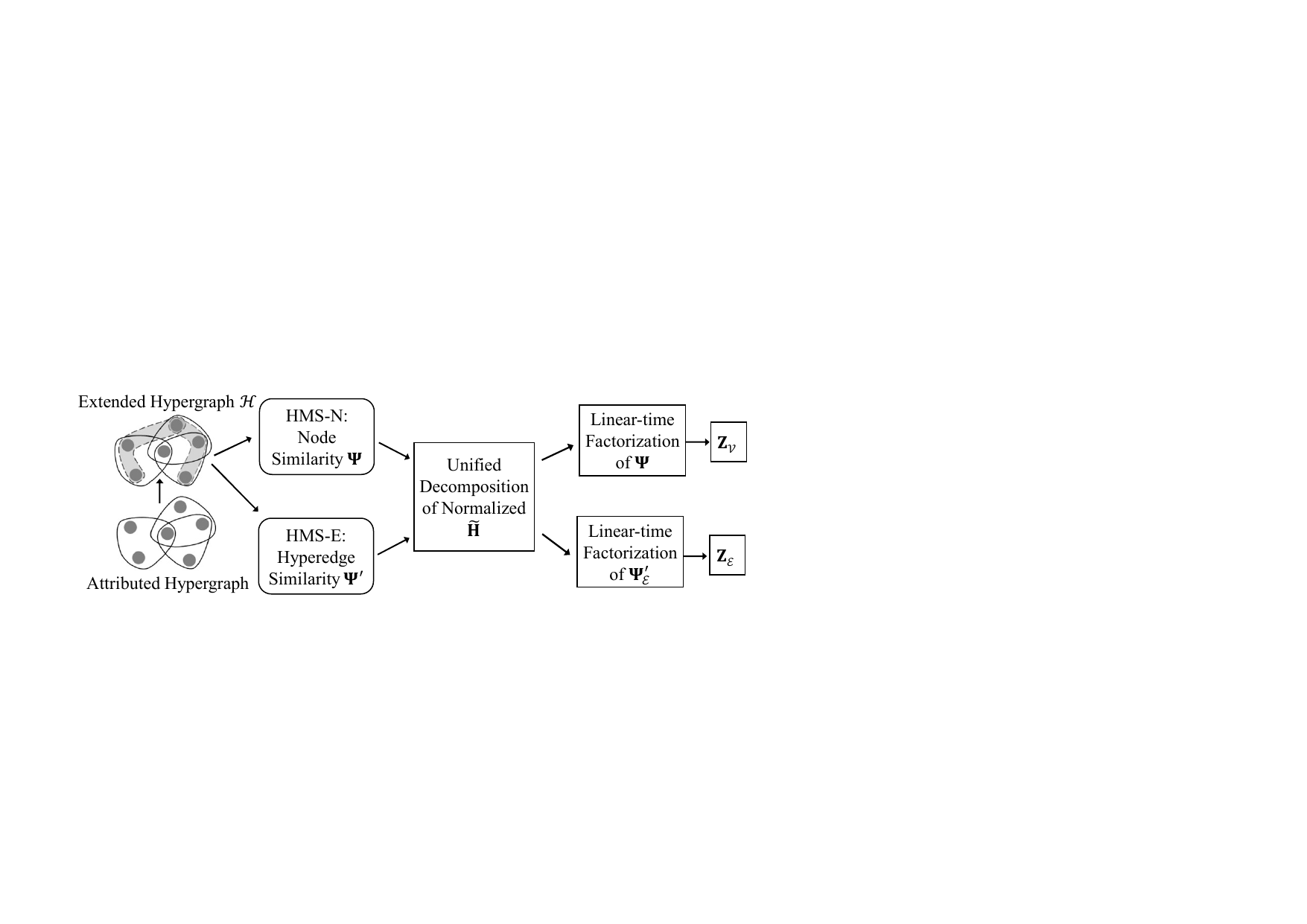}
\vspace{-6mm}
\caption{Overview of the \ours algorithm.} \label{fig:overview}
\vspace{-4mm}
\end{figure}

\section{The \ours Method}\label{sec:finalmethod}
As explained, materializing the dense \nprox and \heprox matrices $\PiM^{(T)}$ and $\PiM^{\prime(T)}$ is computationally expensive. The non-linearity in the definitions of \nprox and \heprox complicates straightforward matrix factorization of the transition and incidence matrices used to construct the similarity matrices.

To solve these difficulties, we develop \ours, an efficient method to produce high-quality \AHE results, with a complete pipeline outlined by Figure \ref{fig:overview}.
\revision{The key ideas are two-fold. First, we analyze the shared core computations of \nprox and \heprox matrices, enabling a unified matrix decomposition procedure for node and hyperedge embedding objectives (Section \ref{sec:unifycomp}). Second, we introduce approximation techniques to efficiently generate node and hyperedge embeddings in linear time, avoiding the materialization of dense \nprox and \heprox similarity matrices, with theoretical guarantees for the approximations (Section \ref{sec:approxNonliear}).}
In Section \ref{sec:finalAlg}, we present the algorithmic details.

\subsection{Unify \nprox and \heprox Computations}\label{sec:unifycomp}

\revision{
    The key strategy is to identify the shared core computations of the \nprox and \heprox matrices, and perform early matrix decomposition to avoid materializing \(\PiM^{(T)}\) and \(\PiM^{\prime(T)}\), thereby improving efficiency.
}
For node embedding, to derive the \nprox matrix $\PsM_T$, according to Eq. \eqref{eq:rwr-formula} and Eq. \eqref{eq:similarity-matrix}, we need to obtain  $\PiM^{(T)}\DM_v^{-1}$ first, 
\begin{equation*} \textstyle
    \textstyle\PiM^{(T)}\DM_v^{-1}= \sum_{i=0}^{T-1} \alpha(1-\alpha)^{i} \PM^i\DM_v^{-1} + (1-\alpha)^{T}\PM^T\DM_v^{-1}.
\end{equation*}
The main computation here is to get the term  $\PM^i\DM_v^{-1}$ ($i\in[T]$). We reformulate  $\PM^i\DM_v^{-1}$ by plugging in the definition of $\PM$ in Eq. \eqref{eq:hgrw-matrix} as follows, and obviously $\PM^i\DM_v^{-1}$ is symmetric. %
\begin{equation}\label{eq:nodeSimAnalysis} \textstyle
    \textstyle\PM^{i}\DM_v^{-1}=\DM_v^{-1/2}\left(\HL\transpose \HL\right)^i\DM_v^{-1/2},
\end{equation}
where $\HL = \WM^{1/2}\DM_e^{-1/2}\HM\DM_v^{-1/2}$ and $\HL\in\RN^{(m+n)\times n}$.

For hyperedge embedding, similarly, to derive the \heprox matrix $\PsM_T'$, according to the formulation in Section \ref{sec:edgeSim}, we need to obtain $\PiM^{'(T)}\DM_e^{-1}$, which relies on a  recurring symmetric matrix $\PM^{\prime i}\DM_e^{-1}\WM^{-1}$ that can be decomposed as 
\begin{equation}\label{eq:edgeSimAnalysis} \textstyle
    \PM^{\prime i}\DM_e^{-1}\WM^{-1}=\DM_e^{-1/2}\WM^{-1/2}\left(\HL\HL\transpose\right)^i\DM_e^{-1/2}\WM^{-1/2}.
\end{equation}

Importantly, observe that in Eq. \eqref{eq:nodeSimAnalysis} and Eq. \eqref{eq:edgeSimAnalysis}, they both rely on matrix $\HL\in\RN^{(m+n)\times n}$, which is essentially a normalized version of the incidence matrix $\HM$.
Specifically, both \nprox and \heprox matrices rely on $\HL$ to get either $\left(\HL\transpose \HL\right)^i$ or $\left(\HL\HL\transpose\right)^i$ for $i$ up to $T$.

Note that $\HL$ is sparse since $\HM$ is typically sparse. Therefore, it is fast to decompose $\HL=\UM\SgM\VM\transpose$ by reduced singular value decomposition (RSVD), where $\SgM$ is an $n\times n$ diagonal matrix containing the first $n$ singular values of $\HL$, while $\UM\in\RN^{(m+n)\times n}$ and $\VM\in\RN^{n\times n}$ contain the associated left and right singular vectors as their rows, respectively. 
Then, $\PM^{i}\DM_v^{-1}$ in Eq. \eqref{eq:nodeSimAnalysis} is formulated as  
\begin{equation*} \textstyle
\begin{aligned}
    \PM^{i}\DM_v^{-1}
    &=\DM_v^{-1/2}\left(\VM\SgM\UM\transpose\UM\SgM\VM\transpose\right)^i\DM_v^{-1/2}
    =\DM_v^{-1/2}\left(\VM\SgM^2\VM\transpose\right)^i\DM_v^{-1/2}\\
    &=\DM_v^{-1/2}\VM\SgM^{2i}\VM\transpose\DM_v^{-1/2},
\end{aligned}
\end{equation*}
where the second and third equalities hold since the singular vectors are orthonormal (i.e., $\UM\UM\transpose=\IM_{m+n}$ and $\VM\VM\transpose=\IM_n$). 

Accordingly, the \nprox matrix $\PsM_T$ can be written as
\begin{equation*} \textstyle
\small
\begin{aligned}\textstyle
    \PsM_T&=\tlog^\circ\left( \vol(\HGA) \PiM^{T}\DM_v^{-1}  \right)\\
    &\textstyle=\tlog^\circ\left[\vol(\HGA) \left(  \sum\limits_{i=0}^{T-1} \alpha(1-\alpha)^{i} \PM^i\DM_v^{-1} + (1-\alpha)^{T}\PM^T\DM_v^{-1}  \right)\right]\\
    &\textstyle=\tlog^\circ\!\left[\vol(\HGA)\DM_v^{-\frac{1}{2}}\VM \left(  \sum\limits_{i=0}^{T-1}\! \alpha(1\!-\!\alpha)^{i}  \SgM^{2i}  \!+\! (1\!-\!\alpha)^{T} \SgM^{2T}  \right)\VM\transpose\DM_v^{-\frac{1}{2}}\right].
\end{aligned}
\end{equation*}
Denote $\widehat{\SgM} = \sum_{i=0}^{T-1} \alpha(1-\alpha)^{i} \SgM^{2i} + (1-\alpha)^{T}\SgM^{2T}$, and note that $\widehat{\SgM}$ is an $n\times n$ diagonal matrix.
Simplify the above equation, and we get
\begin{equation*} \textstyle
    \PsM_T=\tlog^\circ\!\left[\vol(\HGA)\DM_v^{-\frac{1}{2}}\VM \widehat{\SgM}\VM\transpose\DM_v^{-\frac{1}{2}}\right].
\end{equation*}
This can be reformulated as 
\begin{equation} \textstyle
    \PsM_T=\tlog^\circ\left( \FM\transpose\FM \right)\text{, where }\FM=\sqrt{\vol(\HGA)}\DM_v^{-1/2}\VM\widehat{\SgM}^{1/2}.
\end{equation}
To acquire the \heprox matrix $\PsM_T'$ for hyperedge embedding, we can reformulate it as follows, via a similar process:
\begin{equation*} \textstyle
    \PsM_T'=\tlog^\circ\left( \FM^{\prime\texttt{T}}\FM' \right)\text{, where } \FM'=\sqrt{\vol(\HGA)}\DM_e^{-1/2}\WM^{-1/2}\UM\widehat{\SgM}^{1/2}.
\end{equation*}

In this way, both similarity matrices $\PsM_T$ and $\PsM_T'$ can be easily constructed from the RSVD results of $\HL$ via $\FM$ and $\FM'$, without the need to compute $\PiM$ and $\PiM'$, both of which require expensive and repeated multiplications of transition matrices $\PM$ and $\PM'$. Thus, we avoid directly materializing $\PsM_T$ or $\PsM_T'$.

\subsection{\nprox and \heprox Approximations} \label{sec:approxNonliear} 
Despite the reformulation, two efficiency issues still remain. 
First,  the reduced SVD of $\HL$ has a prohibitive $O\left(n(m+n)\right)$ time complexity. To improve scalability, we opt for the truncated SVD $\HL\approx \UM_r\SgM_r\VM_r\transpose$, which only keeps the $r$ largest singular values and the corresponding singular vectors. In this SVD, $\SgM_r=\diag(\sigma_1, \dots, \sigma_r)$ is a diagonal matrix where $\sigma_i$ is the $i$-th largest singular value, while $\UM_r\in\RN^{(m+n)\times r}$ and $\VM_r\in\RN^{n\times r}$ are the left and right singular vectors. By replacing $\SgM$, $\UM$ and $\VM$ with the truncated SVD results, we can derive $\widehat{\SgM}_r$, $\FM_r$ and $\FM_r'$, providing rank-$r$ approximations for the node and hyperedge similarity matrices, where the error is bounded in Theorem \ref{lm:low-rank}, with proof in \submission{the technical report~\cite{report}.}\report{Appendix~\ref{app:proofs}.}
\begin{theorem}\label{lm:low-rank}
    With rank-$r$ matrices $\FM_r=\sqrt{\vol(\HGA)}\DM_v^{-1/2}\VM_r\widehat{\SgM}_r^{1/2}$ and $\FM_r'=\sqrt{\vol(\HGA)}\DM_e^{-1/2}\WM^{-1/2}\UM_r\widehat{\SgM}_r^{1/2}$, we have the following approximation guarantee for $\PsM_T$ and $\PsM_T'$.
\begin{equation*} \textstyle %
\small
\begin{aligned} \textstyle
    \left\Vert\tlog\ewise\left(\FM_r\FM_r\transpose\right)-\PsM_T\right\Vert_F^2 &
    \leq\left[\left\Vert\DM_v^{1/2}\right\Vert_F^2\left\Vert\DM_v^{-1/2}\right\Vert_F^2 \textstyle\sum_{i=r+1}^{n}\widehat{\SgM}[i,i]\right]^2,
    \\
    \left\Vert\tlog\ewise\left(\FM_r'\FM_r^{\prime\texttt{T}}\right)-\PsM_T'\right\Vert_F^2 &\leq \left[\left\Vert\DM_v^{1/2}\right\Vert_F^2\left\Vert\DM_e^{-1/2}\WM^{-1/2}\right\Vert_F^2 \textstyle\sum_{i=r+1}^{n}\widehat{\SgM}[i,i]\right]^2.
\end{aligned}
\end{equation*}
\end{theorem}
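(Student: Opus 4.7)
The plan is to reduce both inequalities to a purely linear-algebraic perturbation bound on $\FM_r\FM_r\transpose$ versus $\FM\FM\transpose$ (and analogously for the primed versions), and then to read off the residual directly from the truncated SVD of $\HL$. The key enabling observation is that $\tlog(x)=\log(\max\{x,1\})$ is $1$-Lipschitz on $\RN$: on $(-\infty,1]$ it is constant; on $[1,\infty)$ its derivative $1/x$ is at most $1$; and a short case split covers the mixed regime in which exactly one of $x,y$ exceeds $1$. Applied entrywise this gives $\|\tlog\ewise(A)-\tlog\ewise(B)\|_F^2\le\|A-B\|_F^2$ for any real matrices $A,B$ of matching shape. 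Combined with the closed forms $\PsM_T=\tlog\ewise(\FM\FM\transpose)$ and $\PsM_T'=\tlog\ewise(\FM'\FM^{\prime\texttt{T}})$ produced by the SVD reformulation just before the theorem, it suffices to bound $\|\FM\FM\transpose-\FM_r\FM_r\transpose\|_F^2$ and $\|\FM'\FM^{\prime\texttt{T}}-\FM_r'\FM_r^{\prime\texttt{T}}\|_F^2$.

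For the node case, I would substitute the truncated SVD $\HL\approx\UM_r\SgM_r\VM_r\transpose$ into $\FM\FM\transpose=\vol(\HGA)\DM_v^{-1/2}\VM\widehat{\SgM}\VM\transpose\DM_v^{-1/2}$ and write the residual explicitly as $\vol(\HGA)\DM_v^{-1/2}E\DM_v^{-1/2}$, where $E=\sum_{i=r+1}^n\widehat{\SgM}[i,i]\,v_iv_i\transpose$ and $v_i$ are the trailing right singular vectors of $\HL$. Two applications of the submultiplicative Frobenius bound $\|AB\|_F\le\|A\|_F\|B\|_F$ then yield $\|\FM\FM\transpose-\FM_r\FM_r\transpose\|_F\le\vol(\HGA)\,\|\DM_v^{-1/2}\|_F^2\,\|E\|_F$. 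Two identifications close the argument: $\vol(\HGA)=\sum_i d(v_i)=\|\DM_v^{1/2}\|_F^2$, which supplies the $\|\DM_v^{1/2}\|_F^2$ factor in the theorem; and orthonormality of the $v_i$ gives $\|E\|_F^2=\sum_{i>r}\widehat{\SgM}[i,i]^2$, so the $\ell^2$-$\ell^1$ inequality on the nonnegative tail upgrades this to $\|E\|_F\le\sum_{i>r}\widehat{\SgM}[i,i]$. Squaring the chain delivers the first inequality. The hyperedge case is structurally identical: the same residual argument goes through with $\UM$ in place of $\VM$ and the scaling $\DM_e^{-1/2}\WM^{-1/2}$ in place of $\DM_v^{-1/2}$, while $\FM'$ still carries the scalar $\sqrt{\vol(\HGA)}$, so the leading factor remains $\|\DM_v^{1/2}\|_F^2$.

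The delicate step I expect to be the main obstacle is the Lipschitz reduction itself: the bound $|\tlog(x)-\tlog(y)|\le|x-y|$ has to hold uniformly across the clipped and logarithmic regimes of $\tlog$, and the matrices $\FM\FM\transpose$ and $\FM_r\FM_r\transpose$ generically contain entries on both sides of $1$, so the mixed-regime case needs care. A secondary quantitative concession is the step from $\sqrt{\sum_{i>r}\widehat{\SgM}[i,i]^2}$ to $\sum_{i>r}\widehat{\SgM}[i,i]$; this is loose in general but keeps the final bound in the clean ``sum of discarded tail singular values'' form that makes the decay in the truncation rank $r$ transparent and easy to trade off against the cost of increasing $r$.
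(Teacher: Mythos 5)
Your proposal is correct and follows essentially the same route as the paper's proof: an entrywise $1$-Lipschitz bound for $\tlog$ reduces the problem to $\Vert\FM\FM\transpose-\FM_r\FM_r\transpose\Vert_F^2$, the residual is identified with the trailing singular triplets $\VM_{r+}\widehat{\SgM}_{r+}\VM_{r+}\transpose$ (resp.\ the $\UM$-side analogue), and two Frobenius submultiplicativity steps plus the identifications $\vol(\HGA)=\Vert\DM_v^{1/2}\Vert_F^2$ and $\sqrt{\sum_{i>r}\widehat{\SgM}[i,i]^2}\le\sum_{i>r}\widehat{\SgM}[i,i]$ finish the argument. The details you flag as delicate (the mixed regime of $\tlog$ and the $\ell^2$-to-$\ell^1$ relaxation of the tail) are handled in the paper exactly as you describe.
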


The second challenge arises from the quadratic time and space costs of computing the matrix multiplication $\FM_r\FM_r\transpose$ and applying the subsequent element-wise $\tlog^\circ(\cdot)$ function. To solve this issue, we employ the polynomial tensor sketch (PTS) technique~\cite{hanPolynomialTensorSketch2020} to approximate $\tlog^\circ(\FM_r\FM_r\transpose)$ with $\Gamma=\YM\ThM\YM\transpose$, where $\YM\in\RN^{n\times(\tau b+1)}$ contains the tensor sketches and the diagonal $\ThM$ encodes polynomial coefficients. With PTS, we can efficiently bypass direct matrix materialization. Specifically, we first generate tensor sketches using polynomial degree $\tau$ and sketch dimension $b$, leveraging count-sketch matrices and recursive calculations via the fast Fourier transform. Then, we estimate polynomial coefficients through regression with sample size $c$ to obtain the full approximation.
To analyze, the PTS method takes only linear time $O(n)$ in total, including $O(\tau nr)$ for count-sketch generation, $O(\tau n b)$ for fast Fourier transform and its inverse, and $O(ncr)$ for fitting $\tlog\ewise(\cdot)$ via regression. Moreover, the approximation error of PTS is bounded by Lemma \ref{lm:pts} for our approximation based on the theory in~\cite{hanPolynomialTensorSketch2020}. 
\begin{lemma} \label{lm:pts} If $\left|\tlog(x)-\sum_{i=0}^\tau x^i\right|\leq \epsilon$ for some $\epsilon>0$ in a closed interval containing all entries of $\FM_r\FM_r\transpose$, the PTS $\GmM=\YM\ThM\YM\transpose$ satisfies $  \textstyle
        \mathbb{E} \left\Vert f\ewise(\FM_r\FM_r\transpose)\!-\!\GmM\right\Vert_F^2\!\leq\! 2n^2\epsilon^2+\!\sum_{i=1}^\tau\! \frac{2\tau(2\!+\!3^i) (\ThM[i,i])^2}{b}\!\left[\!\sum_{j=1}^n\!\left\|\FM_r[j,\!:]\right\|_F^{2i}\right]^2\!\!.$
\end{lemma}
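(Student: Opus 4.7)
\medskip
\noindent\textbf{Proof proposal.}
The plan is to split the total error into two independent sources via the triangle inequality, bound each separately, and combine them. Define the intermediate quantity
\begin{equation*}
\textstyle
\BM \;=\; \sum_{i=0}^{\tau} \ThM[i,i]\,(\FM_r\FM_r\transpose)\ewise i,
\end{equation*}
i.e., the \emph{exact} evaluation of the polynomial $p(x)=\sum_{i=0}^{\tau}\ThM[i,i]\,x^i$ applied element-wise to $\FM_r\FM_r\transpose$, and recall that $\GmM=\YM\ThM\YM\transpose$ is the sketched surrogate of $\BM$ produced by PTS. Using $\|\AM+\CM\|_F^2 \le 2\|\AM\|_F^2 + 2\|\CM\|_F^2$, I would write
\begin{equation*}
\textstyle
\EX\left\Vert \tlog\ewise(\FM_r\FM_r\transpose) - \GmM\right\Vert_F^2 \;\le\; 2\left\Vert \tlog\ewise(\FM_r\FM_r\transpose) - \BM \right\Vert_F^2 \;+\; 2\,\EX\left\Vert \BM - \GmM \right\Vert_F^2.
\end{equation*}

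For the first (deterministic) term, the assumption $|\tlog(x)-\sum_{i=0}^{\tau}x^i|\le\epsilon$ holds uniformly on an interval containing every entry of $\FM_r\FM_r\transpose$. Since $\BM$ applies precisely this polynomial entry-wise (with the PTS-regressed coefficients absorbed into $\ThM$, under the convention of the bound), each of the $n^2$ entries contributes at most $\epsilon^2$, giving $\|\tlog\ewise(\FM_r\FM_r\transpose)-\BM\|_F^2 \le n^2\epsilon^2$, which yields the $2n^2\epsilon^2$ summand.

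For the second (stochastic) term, the $i{=}0$ block of PTS reproduces a constant matrix exactly, so only the degree-$i$ blocks for $i=1,\dots,\tau$ contribute randomness. Decomposing by degree and applying Cauchy--Schwarz (or the $\tau$-term norm inequality $\|\sum_{i=1}^{\tau}\MM_i\|_F^2 \le \tau\sum_{i=1}^{\tau}\|\MM_i\|_F^2$) gives
\begin{equation*}
\textstyle
\EX\left\Vert \BM - \GmM \right\Vert_F^2 \;\le\; \tau \sum_{i=1}^{\tau} (\ThM[i,i])^2 \;\EX\bigl\Vert (\FM_r\FM_r\transpose)\ewise i - \SM_i\bigr\Vert_F^2,
\end{equation*}
where $\SM_i$ is the rank-$b$ count-sketch-based tensor-sketch estimator of $(\FM_r\FM_r\transpose)\ewise i$ built from the $i$-fold self-convolution of count sketches of the rows of $\FM_r$. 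Here I would invoke the second-moment analysis of the tensor sketch: using independence of the hash and sign assignments in the count sketches, together with the standard $i$-th order moment calculation, one obtains the variance bound
\begin{equation*}
\textstyle
\EX\bigl\Vert (\FM_r\FM_r\transpose)\ewise i - \SM_i\bigr\Vert_F^2 \;\le\; \frac{2+3^i}{b}\left[\sum_{j=1}^{n}\|\FM_r[j,:]\|_F^{2i}\right]^2,
\end{equation*}
where the $2+3^i$ constant arises from enumerating the non-vanishing hash-collision patterns of an $i$-fold count-sketch product. This is precisely the tensor-sketch variance lemma from Han et al.~\cite{hanPolynomialTensorSketch2020}, and I would cite it directly rather than rederive the combinatorics. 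Plugging this back and combining the two terms gives exactly the stated bound.

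The main obstacle is the tensor-sketch variance step: carefully tracking the $(2+3^i)/b$ constant requires an $i$-fold moment computation over the count-sketch hashes, and confirming that the outer products $[\YM\ThM\YM\transpose]$ produced by the PTS construction used here match the hypotheses of that variance lemma (same count-sketch convolution via \fft, same unbiasedness at each degree $i$). The polynomial-approximation step and the triangle-inequality bookkeeping are routine; the remaining subtlety is making sure the regression step used to fit $\ThM$ to $\tlog$ does not inflate $\ThM[i,i]$ beyond what the hypothesis controls, which is implicit in the way the bound is stated in terms of $\ThM[i,i]$ itself.
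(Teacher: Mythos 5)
Your proof is correct and follows essentially the same route as the paper, which gives no independent argument for this lemma and instead states it as a direct instantiation of the PTS error theorem of Han et al.~\cite{hanPolynomialTensorSketch2020}; that theorem's proof is precisely your decomposition — a triangle inequality yielding the $2n^2\epsilon^2$ polynomial-approximation term plus degree-wise tensor-sketch variance terms with the $(2+3^i)/b$ constants, summed with the extra factor $\tau$ from the $\tau$-term norm inequality. Your remark about the mismatch between the unit-coefficient polynomial $\sum_{i=0}^\tau x^i$ in the hypothesis and the regressed coefficients $\ThM[i,i]$ appearing in the bound identifies a genuine quirk of the paper's statement (inherited loosely from the cited theorem, which phrases the hypothesis in terms of the fitted polynomial), not a gap in your argument.
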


Although $\GmM=\YM\ThM\YM\transpose$ resembles the eigendecomposition  of $\GmM$, we cannot directly use $\YM$ and $\ThM$ to derive embeddings, since the dimension of $\YM$, $n\times(\tau b+1)$, does not agree with the $k$-dimensional embedding space. To obtain a $k$-dimensional decomposition efficiently, avoiding the quadratic complexity of standard factorization, we apply the Lanczos method for eigendecomposition. This method computes the $k$ leading eigenpairs of $\GmM$ by iteratively applying the linear operator $\LS(\ve)=\GmM \ve=\YM\left(\ThM\left(\YM\transpose\ve\right)\right)$, which multiplies a vector $\ve$ by $\GmM$ with complexity linear to $n$.   
Consequently, the matrix $\LdM_\GmM$ contains the $k$ largest-magnitude eigenvalues of $\GmM$, and $\QM_{\GmM}$ comprises their corresponding eigenvectors as columns, yielding a factorization: $\QM_{\GmM}\LdM_\GmM\QM_{\GmM}\transpose$. Finally, we get the node embeddings as 
\begin{equation}\label{eq:ZV_shape}
    \textstyle \ZM_\VS=\QM_{\GmM}\LdM_{\GmM}^{1/2}.
\end{equation}

Following a similar process with details omitted, we can approximate the hyperedge similarity matrix $\tlog^\circ(\FM_r^{\prime}{\FM_r'}\transpose)$ with $\GmM'$, decomposed as $\QM_{\GmM}'\LdM_\GmM'{\QM_\GmM'}\transpose$, and derive the hyperedge embeddings 
\begin{equation} \label{eq:ZE_shape} \textstyle 
    \ZM_\ES=\QM_{\GmM}'\LdM_{\GmM}^{\prime 1/2}.
\end{equation}

\subsection{\ours Algorithm Details}\label{sec:finalAlg}
\begin{algorithm}[!t]
\caption{\ours} \label{alg:ahe} 
\small
\KwIn{Hyperedge incidence matrix $\HM_0\in\RN^{m\times n}$, node attribute matrix $\XM\in\RN^{n\times q}$, embedding dimension $k$, algorithm parameters $K,r,T,\alpha,\tau,b,c$.}
 $\HM, \DM_v, \DM_e, \WM\gets\aug(\HM_0,\XM,K)$\; 
 $\HL \gets \WM^{1/2}\DM_e^{-1/2}\HM\DM_v^{-1/2}$ \tcp*{Eq. \eqref{eq:nodeSimAnalysis}}
 $\UM_r, \SgM_r, \VM_r\gets \svds\left(\HL, r\right)$\;
 $\widehat{\SgM}_r\gets \IM_r$\;
  \For{$i\gets 1, \dots, T$}{ 
  $\widehat{\SgM}_r \gets \alpha \IM_r + (1-\alpha) \SgM_r^2 \widehat{\SgM}_r$\;
  }
  $\FM_r\gets\sqrt{\vol(\HGA)}\DM_v^{-1/2}\VM_r\widehat{\SgM}_r^{1/2}$ \tcp*{Theorem \ref{lm:low-rank}}
  $\YM,\ThM\gets\pts(\FM_r, \tlog, \tau, b, c)$\;
Linear operator $\LS(\ve)=\YM\left(\ThM\left(\YM\transpose\ve\right)\right)$\;
 $\LdM_\GmM, \QM_\GmM\gets\eigsh (\LS, k)$\tcp*{$\texttt{eigen}\left(\YM\ThM\YM\transpose, k\right)$}
 $\ZM_\VS\gets\QM_\GmM \LdM_\GmM^{1/2}$ \tcp*{Eq. \eqref{eq:ZV_shape}}
  $\FM_r'\gets\sqrt{\vol(\HGA)}\DM_e^{-1/2}\WM^{-1/2}\UM_r\widehat{\SgM}_r^{1/2}$\;
 $\YM',\ThM'\gets\pts\left(\FM_r'[1:m+1,:], \tlog, \tau, b, c\right)$\;
Linear operator $\LS'(\ve)=\YM'\left(\ThM'\left(\YM^{\prime\texttt{T}}\ve\right)\right)$\;
 $\LdM_\GmM',\QM_\GmM'\gets\eigsh \left(\LS', k\right)$\tcp*{$\texttt{eigen}\left(\YM'\ThM'\YM^{\prime\texttt{T}}, k\right)$}
  $\ZM_\ES\gets\QM_\GmM' \LdM_\GmM^{
\prime 1/2}$ \tcp*{Eq. \eqref{eq:ZE_shape}}
\Return $\ZM_\VS, \ZM_\ES$\;
\end{algorithm} 

With the aforementioned techniques, we can derive node and hyperedge embeddings efficiently without materializing dense similarity matrices. The pseudocode of \ours, our proposed method for attributed hypergraph embedding, is presented in Algorithm \ref{alg:ahe}.

\stitle{Algorithm}
After constructing the attribute-extended hypergraph $\HGA$ at Line 1 by Algorithm \ref{alg:augment} (Line 1), we get the incidence matrix $\HM$, the degree matrices $\DM_v, \DM_e$ and the weight matrix $\WM$. Then, we obtain the normalized hypergraph incidence matrix $\HL$ and decompose it into $\UM_r, \SgM_r$, and $\VM_r$ via the rank-$r$ {TruncatedSVD} (Lines 2-3). We calculate $\widehat{\SgM}_r$ from the $r$ largest singular values of $\HL$ in Lines 4-6. Then we derive the embeddings for nodes and hyperedges. 

For node embeddings (Lines 7-11), we first derive $\FM_r$ by its definition in Theorem \ref{lm:low-rank}, and the node similarity matrix becomes $\tlog^\circ\left( \FM_r\FM_r^{\transpose} \right)$. To compute this $\tlog^\circ\left( \FM_r\FM_r^{\transpose} \right)$ function efficiently, we derive its polynomial tensor sketches $\YM$ and $\ThM$. Finally, the node embeddings $\ZM_{\VS}=\QM_\GmM\LdM_\GmM^{1/2}$ are obtained by factorizing $\GmM=\YM\ThM\YM\transpose$ into $\QM_\GmM\LdM_\GmM\QM_\GmM\transpose$ via the Lanczos technique~\cite{lehoucqDeflationTechniquesImplicitly1996}. 

Following a similar process, we can derive the hyperedge embeddings $\ZM_{\ES}=\QM_\GmM'\LdM_\GmM^{\prime1/2}$ (Lines 12-16), except that in Line 13 we only generate sketches for the first $m$ rows of $\FM_r$, since we are only interested in the embeddings of the original hyperedges.

\stitle{Complexity} With the above approximation techniques, our proposed \ours algorithm has a much lower complexity than the base method. To analyze, we first consider the basic steps. Specifically, invoking \aug to derive the matrices in Line 1 takes $O(n\log n+nqK)$ time. The multiplication of matrices in Line 2 costs only linear time, since $\WM$, $\DM_v$, and $\DM_e$ are diagonal and $\HM$ is a sparse matrix with $n\avgd+nK$ nonzero entries. The \svds technique in Line~3 involves a bounded number of matrix-vector multiplications on $\HL$, and hence incurs $O(n\avgd+nK)$ time complexity. Then, the calculation of $\widehat{\SgM}_r$ in Lines 4-6 takes a negligible $O(Tr)$ time. As can be seen, the common steps for node and hyperedge embedding only take linear time in total. 
To derive node embeddings, we compute $\FM_r$ in Line 7, which takes $O(nr)$ time. Next, recall from Section \ref{sec:approxNonliear} that the approximation via the \pts method in Line 8 finishes in linear time $O(n)$. Regarding the Lanczos method in Lines 9-10, the linear operator $\LS(\cdot)$ executes in linear time and is applied a constant $O(1)$ number of times. Hence, its computation remains linear in time. Finally, the time to obtain node embeddings is $O(n)$. By similar arguments, we can conclude that the time to obtain hyperedge embeddings is $O(m+n)$.
To summarize, the overall time complexity of \ours is $O(n\log n+n\avgd +nq+m)$, or simply $O(n\log n+m)$ as $q$ and $\avgd$ can be considered constant. Moreover, the memory overhead of \ours is $O(n\avgd+nq+m)$, which is linear in the size of the input $\ahg$, since all involved matrices are either sparse or low-dimensional. 

\revision{
\stitle{Discussion} \ours achieves substantial speedup at the cost of approximation errors, compared to \oursbase, which directly computes and factorizes the similarity matrices. Experiments show that on small datasets, the performance of \ours and \oursbase is similar, though \oursbase is often slightly better. However, \oursbase cannot scale to large datasets, while \ours consistently outperforms existing methods in efficiency and effectiveness. 
Hence, the efficiency gain achieved by \ours is well worth the approximation trade-offs.

}

\section{Experiments}\label{sec:expriment}
After providing the experimental settings in  Section \ref{exp:setup}, we report the performance of node embedding on node classification task in Section \ref{exp:node_embedding_performance} and on hyperedge link prediction task in Section \ref{exp:link-prediction}, and the performance of hyperedge embedding on hyperedge classification task in Section \ref{exp:edge_embedding_performance}.
The efficiency results and experimental analysis are in Section \ref{exp:efficiency} and Section \ref{exp:abaltion}. 

\subsection{Experimental Setup}\label{exp:setup}

\begin{table}[!t]
  \caption{Dataset Statistics.}
  \label{tab:datasets}
  \vspace{-3mm}
  \centering
    \resizebox{0.73\columnwidth}{!}{%
    \setlength{\tabcolsep}{3pt}
    \renewcommand{\arraystretch}{0.9} %
  \begin{tabular}{=c+c+c+c+c+c+c}
    \toprule
    Dataset     & $n$ & $m$ &$\avgd$ & $\avge$ &$q$  &$\ell$  \\ \midrule
    DBLP-CA &2,591 &2,690 &2.39 &2.31 &334 &4\\
    Cora-CA &2,708 &1,072 &1.69 &4.28 &1,433 &7\\
    Cora-CC &2,708 &1,579 &1.77 &3.03 &1,433 &7\\
    Citeseer &3,312 &1,079 &1.04 &3.20 &3,703 &6\\
    \newrow
    Mushroom & 8,124 &298 &5.0 &136.3 &126 &2\\
    20News &16,242 &100 &4.03 &654.5 &100 &4\\
    DBLP &41,302 &22,263 &2.41 &4.45 &1,425 &6\\
    \newrow
    Recipe &101,585 &12,387 &25.2 &206.9 &2,254 &8\\
    Amazon &2,268,083 &4,285,295 &32.2 &17.1 &1,000 &15\\
    MAG-PM &2,353,996 &1,082,711 &7.34 &16.0 &1,000 &22\\    
    \bottomrule
  \end{tabular}
  }
  \vspace{-3mm}
\end{table}

\begin{table*}[!t]
\centering
\caption{Node classification performance. The best three are in gray with darker shades indicating better performance.}\vspace{-3mm}

\resizebox{0.98\textwidth}{!}{
\renewcommand{\arraystretch}{0.95}
\setlength{\tabcolsep}{3pt}
\begin{tabular}{|c|cc|cc|cc|cc|BB|cc|cc|BB|cc|cc|B|}
\hline
\multirow{2}{*}{\bf{Method}} & \multicolumn{2}{c|}{\bf{DBLP-CA}} & \multicolumn{2}{c|}{\bf{Cora-CA}} & \multicolumn{2}{c|}{\bf{Cora-CC}} & \multicolumn{2}{c|}{\bf{Citeseer}}  & \multicolumn{2}{c|}{\revision{\bf{Mushroom}}} & \multicolumn{2}{c|}{\bf{20News}} & \multicolumn{2}{c|}{\bf{DBLP}} & \multicolumn{2}{c|}{\revision{\bf{Recipe}}} & \multicolumn{2}{c|}{\bf{Amazon}} & \multicolumn{2}{c|}{\bf{MAG-PM}} & \multirow{2}{*}{\bf{Rank}} \\ \cline{2-21}

&MiF1 & MaF1 &MiF1 & MaF1 &MiF1 & MaF1 &MiF1 & MaF1 &MiF1 & MaF1 &MiF1 & MaF1 &MiF1 & MaF1 &MiF1 & MaF1 &MiF1 & MaF1 &MiF1 & MaF1 &\\ \hline

\hypertovec  & 0.446 & 0.410 & 0.412 & 0.365 & 0.493 & 0.460 & 0.311 & 0.258 &-  &-        & - & - & 0.702 & 0.672 & - & - & - & - & - & - & 8.9 \\
\pane        & 0.671 & 0.651 & 0.516 & 0.456 & 0.508 & 0.491 & 0.443 & 0.399 &0.910 &0.909        & 0.566 & 0.464 & 0.750 & 0.734 & - & - & - & - & \third{0.378} & \third{0.230} & 6.9 \\
\aneci       & 0.683 & 0.661 & 0.625 & 0.582 & 0.453 & 0.367 & 0.454 & 0.399 &0.914	&0.913       & 0.694 & 0.589 & - & - & - & - & - & - & - & - & 7.3 \\
\conn        & 0.756 & 0.744 & 0.684 & 0.640 & 0.637 & 0.577 & 0.626 & 0.563 &-  &-       & - & - & 0.828 & 0.814 & - & - & - & - & - & - & 6.0 \\
\villain     & 0.462 & 0.439 & 0.457 & 0.412 & 0.484 & 0.490 & 0.301 & 0.272 &0.984 &0.984 & 0.730 & 0.645 & 0.692 & 0.657 & - & - & - & - & - & - & 7.6\\
\anchorgnn   & 0.275 & 0.196 & 0.239 & 0.096 & 0.254 & 0.095 & 0.195 & 0.114 &0.854 &0.853       & 0.545 & 0.429 & 0.271 & 0.071 &\third{0.379}	&\third{0.069}  & \third{0.310} & \third{0.032} & 0.252 & 0.018 & 9.2 \\
\biane       & 0.705 & 0.682 & 0.716 & 0.683 & 0.652 & 0.625 & \third{0.644} & \third{0.579} &0.969 &0.969 & - & - & \third{0.853} & \third{0.843} & - & - & - & - & - & -& \third{5.0} \\
\tricl       & 0.787 & 0.778 & 0.702 & 0.677 & \third{0.668} & \third{0.646} & 0.540 & 0.487    &0.978 &0.978      & 0.761 & 0.722 & - & - & - & - & - & - &- &- & \third{5.0} \\
\hypeboy     & \third{0.812} & \third{0.789} & \third{0.725} & \third{0.688}  & 0.627 & 0.584 & 0.476 & 0.420 &0.970	&0.970  & - & - & - & - & - & - & - & - &- &- & 5.8 \\
\netmf       & 0.536 & 0.514 & 0.518 & 0.458 & 0.527 & 0.513 & 0.324 & 0.281 &\third{0.987}	&\third{0.987}        & \third{0.766} & \third{0.733} & 0.744 & 0.721 & - & - & - & - &- &- & 6.2 \\
\lightne     & 0.545 & 0.519 & 0.520 & 0.469 & 0.533 & 0.514 & 0.342 & 0.295 &0.959	&0.959        & 0.700 & 0.646 & 0.733 & 0.712 &\second{0.382}	&\second{0.099} & \second{0.443} & \second{0.210} & \second{0.603} & \second{0.353} & 6.0 \\

\textit{\oursbase} &\first{0.836}	&\first{0.828}	&\first{0.777}	&\first{0.754}	&\first{0.753}	&\first{0.732}	&\first{0.693}	&\first{0.628}	&\second{0.997}	&\second{0.997}	&\first{0.801}	&\first{0.775}	&\first{0.898}	&\first{0.894} & - & - & - & - &- &- & \second{2.1} \\

\textbf{\ours} & \second{0.824} & \second{0.816} & \second{0.753} & \second{0.732} & \second{0.742} & \second{0.720} & \second{0.690} & \second{0.622}    &\first{0.999}	&\first{0.999}          & \second{0.786} & \second{0.748} & \second{0.867} & \second{0.859} &\first{0.630} &\first{0.236} & \first{0.718} & \first{0.396} & \first{0.698} & \first{0.451} & \first{1.6} \\
\hline
\end{tabular}
}
\label{tab:nc_performance}
\vspace{-2mm}
\end{table*}

\noindent\textbf{Datasets.} Table \ref{tab:datasets} summarizes the statistics of attributed hypergraphs used in our experiments, including the number of nodes ($n$) and hyperedges ($m$), the average node degree ($\avgd$), \revision{the average hyperedge size ($\avge$),} the dimension of node attributes ($q$), and the number of ground-truth class labels ($\ell$). DBLP-CA, Cora-CA, Cora-CC, Citeseer, and DBLP are benchmark datasets in~\cite{yadati2019hypergcn}. \revision{Mushroom and 20News are from \cite{chienYouAreAllSet2021}, and   Recipe   is from \cite{liSHARESystemHierarchical2022}.}  Amazon and MAG-PM are million-scale from~\cite{liEfficientEffectiveAttributed2023}. In DBLP-CA, Cora-CA, and MAG-PM, nodes represent publications, and hyperedges link publications by the same author. In  DBLP, nodes are authors, and hyperedges connect co-authors of a publication. Cora-CC and Citeseer are co-citation datasets where hyperedges group publications cited together. Nodes in these datasets have textual attributes from abstracts, with class labels indicating research areas. 
\revision{The Mushroom dataset forms hyperedges by connecting mushrooms (nodes) with the same traits. A mushroom has a one-hot binary attribute vector from categorical features and is labeled as edible or poisonous.}
The 20News dataset forms hyperedges by shared keywords, using TF-IDF vectors as node attributes and topics as labels. 
\revision{Recipe is a recipe-ingredient hypergraph with bag-of-words attributes from instruction texts and dense hyperedge connections.} 
In Amazon, nodes are products, hyperedges connect products reviewed by the same user, and attributes come from metadata, with categories as labels. Hyperedges lack labels, so we assign each the most frequent node label. For example, an author hyperedge in Cora-CA takes the predominant research area among its publications, while in Amazon, a user hyperedge adopts the most common product category.

\stitle{Baselines} For \textit{node embedding} evaluation, we compare \ours against 11 baselines in total, including the hypergraph embedding approach \hypertovec\cite{huangHyper2vecBiasedRandom2019}, and three attributed graph embedding approaches (\ie \pane~\citep{yangPANEScalableEffective2023}, \aneci\cite{liu2022robust}, and \conn\cite{tan2023collaborative}), which are applied to reduced graphs derived from the clique expansion of the hypergraph. Also, we consider two bipartite graph embedding techniques \anchorgnn~\cite{wuBillionScaleBipartiteGraph2023} and \biane~\cite{huangBiANEBipartiteAttributed2020a} that are applied to a bipartite graph where hyperedges are treated as a distinct set of nodes separate from the original nodes. Finally, we include three self-supervised learning baselines (\villain~\cite{leeVilLainSelfSupervisedLearning2024}, \tricl\cite{leeImMeWere2023}, and \hypeboy\cite{kimHypeBoyGenerativeSelfSupervised2023}), with \tricl and \hypeboy targeted for attributed hypergraph embedding, and matrix factorization approaches on the general graph, \netmf~\cite{qiuNetworkEmbeddingMatrix2018a} and \lightne~\cite{qiuLightNELightweightGraph2021}. For \textit{hyperedge embedding} evaluation, we also compare these baselines, among which bipartite graph embedding methods (\anchorgnn and \biane) can produce embeddings for two parts as node and hyperedge embeddings, respectively. The remaining methods compute a hyperedge embedding by averaging the node embeddings in the hyperedge. In addition, we also compare \ours with the base method in Section \ref{sec:basemethod} for effectiveness.

\stitle{Implementation} \revision{On all datasets, \ours and \oursbase have the identical parameter settings: $K=10$, $\beta=1.0$, $\alpha = 0.1$, $T=10$. 
For all datasets, \ours performs approximation with $r=32$, $\tau=3$, $b=128$, and $c=10$, except Mushroom with $r=16$.}  We fix the output node and hyperedge embedding dimension $k$ to 32 for all approaches. The parameters for all tested baselines are configured according to their respective papers. Our method  \ours,  along with most baselines, is implemented in Python, except for the C++ competitor \lightne. 

\stitle{Evaluation} We conduct experimental evaluations on a Linux computer with an Intel Xeon Platinum 8338C CPU, an NVIDIA RTX 3090 GPU, and 384 GB of RAM, where a maximum of 16 CPU threads are available. The methods \aneci, \conn, \villain, \anchorgnn, \tricl, and \hypeboy benefit from GPU acceleration, while the other methods, including \oursbase and \ours, are executed on the CPU.
We report average results over 10 repeated runs.  
If an approach fails to complete within 24 hours or runs out of memory, it is considered to rank last, and we {record the result as ` - '} in Tables \ref{tab:nc_performance}-\ref{tab:hec_performance}.

\begin{table*}[!t]
\centering
\caption{Hyperedge link prediction performance. The best three are in gray with darker shades indicating better performance. }\vspace{-3mm}

\resizebox{0.98\textwidth}{!}{
\renewcommand{\arraystretch}{0.96}
\setlength{\tabcolsep}{3.5pt}
\begin{tabular}{|c|cc|cc|cc|cc|BB|cc|cc|BB|cc|cc|B|}
\hline
\multirow{2}{*}{\bf{Method}} & \multicolumn{2}{c|}{\bf{DBLP-CA}} & \multicolumn{2}{c|}{\bf{Cora-CA}} & \multicolumn{2}{c|}{\bf{Cora-CC}} & \multicolumn{2}{c|}{\bf{Citeseer}} & \multicolumn{2}{c|}{\revision{\bf{Mushroom}}} & \multicolumn{2}{c|}{\bf{20News}} & \multicolumn{2}{c|}{\bf{DBLP}}  & \multicolumn{2}{c|}{\revision{\bf{Recipe}}}  & \multicolumn{2}{c|}{\bf{Amazon}} & \multicolumn{2}{c|}{\bf{MAG-PM}} & \multirow{2}{*}{\bf{Rank}} \\ \cline{2-21}

&Acc & AUC &Acc & AUC  &Acc & AUC  &Acc & AUC  &Acc & AUC  &Acc & AUC  &Acc & AUC  &Acc & AUC  &Acc & AUC &Acc & AUC &\\ \hline
\hypertovec   & 0.631 & 0.712 & 0.667 & 0.751 & 0.715 & 0.751 & 0.669 & 0.684 &- &- & -   & -   & 0.704 & 0.741 &- &- & -   & -   & -   & -   & 8.1 \\
\pane         & 0.687 & 0.774 & 0.685 & 0.765 & 0.747 & 0.755 & 0.685 & 0.680 &0.930 &0.974 & 0.513 & 0.638 & 0.723 & \third{0.831} &-  &- & -   & -   & \third{0.622} & \third{0.697} & 6.1 \\
\aneci        & 0.704 & 0.797 & 0.695 & 0.778 & 0.753 & 0.836 & \second{0.793} & \third{0.890} &0.947	&0.976  & 0.615 & 0.617 & -   & -   & -   & -   & -   & -   &- &-& 5.4 \\
\conn         & \first{0.797} & \third{0.880} & 0.655 & 0.710 & 0.737 & 0.835 & 0.751 & 0.856 &-  &-  & -   & -   & 0.727 & 0.814 & -   & -   & -   & -   &- &-& 6.3 \\
\villain      & 0.638 & 0.721 & 0.682 & 0.729 & 0.729 & 0.833 & 0.659 & 0.717 &0.905	&0.971 & 0.500 & 0.396 & 0.698 & 0.676 & -   & -   & -   & -   &- &-& 7.4 \\
\anchorgnn    & 0.530 & 0.553 & 0.512 & 0.525 & 0.628 & 0.688 & 0.565 & 0.603 &0.693	&0.822 & 0.515 & 0.403 & 0.516 & 0.522 &\third{0.506}	&\third{0.553} & \third{0.694} & \third{0.773} & 0.484 & 0.476 & 9.1 \\
\biane        & 0.638 & 0.599 & 0.648 & 0.597 & 0.751 & 0.721 & 0.690 & 0.647 &0.941 &0.981 & -   & -   & 0.681 & 0.631 & -   & -   & -   & -   &- &-& 7.9 \\
\tricl        & 0.719 & 0.808 & 0.682 & 0.738 & 0.727 & 0.837 & 0.720 & 0.824 &0.942	&\third{0.988} & 0.615 & 0.858 & -   & -   & -   & -   & -   & -   &- &-& 5.7 \\
\hypeboy      & 0.718 & 0.836 & \third{0.740} & \first{0.843} & \first{0.835} & \second{0.924} & 0.741 & 0.805 &0.937	&0.982 & -   & -   & -   & -   & -   & -   & -   & -   &- &-& \third{5.4} \\
\netmf        & 0.659 & 0.715 & \third{0.740} & 0.793 & 0.722 & 0.736 & 0.643 & 0.617 &0.943	&\third{0.988} & \third{0.755} & \third{0.873} & \third{0.755} & 0.817 & -   & -   & -   & -   &- &-& 5.9 \\
\lightne      & 0.632 & 0.676 & 0.675 & 0.672 & 0.725 & 0.839 & 0.671 & 0.756 &\third{0.954}	&\third{0.988} & 0.535 & 0.658 & 0.696 & 0.703 &\second{0.642} &\second{0.689} & \second{0.732} & \second{0.820} & \second{0.746} & \second{0.793} & 5.8 \\
\oursbase &\second{0.785}	&\first{0.893}	&\second{0.744}	&\third{0.815}	&\third{0.790}	&\third{0.899}	&\third{0.783}	&\second{0.905}	&\second{0.968}	&\second{0.996}	&\second{0.825}	&\first{0.969}	&\second{0.811}	&\second{0.896} & -   & -   & -   & -   &- &-  & \second{2.8} \\

\textbf{\ours} & \third{0.776} & \second{0.890} & \first{0.766} & \second{0.828} & \second{0.807} & \second{0.902} & \first{0.801} & \first{0.916} &\first{0.989} &\first{0.999}  & \first{0.870} & \second{0.956} & \first{0.824} & \first{0.911} &\first{0.763}	 &\first{0.830} & \first{0.909} & \first{0.965} & \first{0.761} & \first{0.798} & \first{1.4} \\
\hline
\end{tabular}
}
\label{tab:lp_performance}
\vspace{-1mm}
\end{table*}

\subsection{Node Classification} \label{exp:node_embedding_performance}

For attributed hypergraphs, node classification seeks to predict class labels using node embeddings. We split datasets into training and test sets, using a 20\%/80\% ratio for most, except Amazon and MAG-PM, where 2\% is allocated for training due to their size. Ten random splits are generated per dataset, and we report average results. Embeddings, derived without accessing label information, are used to train a simple linear classifier on the training set, with performance evaluated on the test set. Classification effectiveness is assessed via Micro-F1 (MiF1) and Macro-F1 (MaF1) scores, where higher values indicate better performance.

Table \ref{tab:nc_performance} shows the results, with the top three performances for each dataset highlighted in gray, using darker shades for better performance. The Rank column indicates the average ranking of each method across all metrics. \ours achieves the best overall rank of 1.6, significantly outperforming the next best competitors, \biane and \tricl, which rank at 5.0. On large datasets like Amazon and MAG-PM, most competitors fail to return results within time and memory limits. Compared to \oursbase from Section \ref{sec:basemethod}, \ours, developed in Section \ref{sec:finalmethod}, is outperformed slightly on small datasets but excels on large ones where \oursbase is inefficient. This highlights the effectiveness of \ours's approximation techniques in maintaining result quality while improving efficiency.
For instance, on Cora-CC, \oursbase and \ours secure the first and second positions, respectively, outperforming the third-ranked \tricl by up to 8.5\% in both MiF1 and MaF1. {On the DBLP-CA, Cora-CA, Citeseer, Mushroom, 20News, and DBLP datasets, \ours improves over the best competitors by 1.2\%, 2.8\%, 4.6\%, 1.2\%, 2.0\%, and 1.4\% in MiF1, and 2.7\%, 4.4\%, 4.3\%, 1.2\%, 1.5\%, and 1.6\% in MaF1, respectively. On the densely connected Recipe, \ours significantly outperforms the best competitor by 24.8\% in MiF1 and 13.7\% in MaF1.} On the large Amazon and MAG-PM, \ours also surpasses the runner-up with margins up to 27.5\% in MiF1 and 18.6\% in MaF1 on Amazon.
Table \ref{tab:nc_performance} demonstrates \ours's excellent performance in node classification, indicating the high quality of node embeddings and the effectiveness of the \nprox objective from Section \ref{sec:problemformulation} and algorithm designs in Section \ref{sec:approxNonliear}.

\begin{table*}[!t]
\centering
\caption{Hyperedge classification performance. The best three are in gray with darker shades indicating better performance. (20News is excluded for lack of suitable labels.)}\vspace{-3mm}

\resizebox{0.93\textwidth}{!}{
\renewcommand{\arraystretch}{0.96}
\setlength{\tabcolsep}{3.5pt}
\begin{tabular}{|c|cc|cc|cc|cc|BB|cc|BB|cc|cc|B|}
\hline
\multirow{2}{*}{\bf{Method}} & \multicolumn{2}{c|}{\bf{DBLP-CA}} & \multicolumn{2}{c|}{\bf{Cora-CA}} & \multicolumn{2}{c|}{\bf{Cora-CC}} & \multicolumn{2}{c|}{\bf{Citeseer}} & \multicolumn{2}{c|}{\revision{\bf{Mushroom}}}  & \multicolumn{2}{c|}{\bf{DBLP}} & \multicolumn{2}{c|}{\revision{\bf{Recipe}}}  & \multicolumn{2}{c|}{\bf{Amazon}} & \multicolumn{2}{c|}{\bf{MAG-PM}} & \multirow{2}{*}{\bf{Rank}} \\ \cline{2-19}

&MiF1 & MaF1 &MiF1 & MaF1 &MiF1 & MaF1 &MiF1 & MaF1 &MiF1 & MaF1 &MiF1 & MaF1 &MiF1 & MaF1 &MiF1 & MaF1 &MiF1 & MaF1 & \\ \hline

\hypertovec &0.569	&0.518 &0.439	&0.370 &0.794 &0.786 &0.589	&0.511 &- &- &0.599	&0.553 &- &- &- &- &- &-  &7.7\\
\pane &0.704	&0.673 &0.515	&0.426 &0.737	&0.725 &0.567	&0.495 & 0.769	&0.765  &0.751	&0.731 &- &- &- &- &0.303	&\third{0.111} &6.4 \\
\aneci &0.685	&0.664 &0.599	&0.529 &0.542	&0.484 &0.534	&0.398 &  0.821	&0.813 &- &- &- &- &- &- &- &- &7.6 \\
\conn &0.809	&0.786 &0.641	&0.587 &0.781	&0.760 &0.689	&0.612 &- &-  &\third{0.837}	&\third{0.815} &- &- &- &- &- &- &{5.6} \\
\villain &0.567	&0.515 &0.459	&0.382 &0.799	&0.790 &0.619	&0.538 & 0.838	&0.835 &0.550	&0.486 &- &- &- &- &- &- &6.4 \\
\anchorgnn &0.307	&0.251 &0.185	&0.142 &0.194	&0.146 &0.201	&0.170 &0.622	&0.602 &0.267	&0.087 &\second{0.454}	&\second{0.078} &\third{0.372}	&\third{0.036} &\third{0.334}	&0.044 &9.1 \\
\biane &0.479	&0.408 &0.241	&0.179 &0.577	&0.506 &0.462	&0.377 &0.767 &0.762 &0.462 &0.342 &-  &- &-  &- &- &- &8.8\\ 
\tricl &0.804	&0.778 &0.646	&0.590 &\third{0.820}	&\third{0.808} &0.659	&0.579 & 0.838	&0.834 &- &-  &- &- &- &- &- &- &5.2 \\
\hypeboy &\third{0.820}	&\third{0.799} &\third{0.719}	&\third{0.662} &0.794	&0.775 &\third{0.728}	&\third{0.630} & 0.835	&0.830 &-  &- &-  &- &- &- &- &- &5.3 \\
\netmf &0.650	&0.607 &0.452	&0.399 &0.797	&0.792 &0.596	&0.530 & \third{0.879}	&\third{0.877} &0.730	&0.699 &- &- &- &- &- &- &5.8 \\
\lightne &0.654	&0.610 &0.458	&0.399 &0.800	&0.791 &0.617	&0.534 &0.847	&0.844 &0.702	&0.673 &\third{0.199}	&\third{0.051} &\second{0.774}	&\first{0.485} &\second{0.502}	&\second{0.179} &\third{4.9} \\

\oursbase &\first{0.858}	&\first{0.838} &\first{0.775}	&\first{0.740} &\first{0.852}	&\first{0.846} &\first{0.770}	&\first{0.684}  &\first{0.926}	&\first{0.925} &\first{0.908}	&\first{0.898} &- &- &- &- &- &- &\second{2.1} \\
\textbf{\ours} &\second{0.854}	&\second{0.836} &\second{0.764}	&\second{0.711} &\second{0.850}	&\second{0.839} &\second{0.756}	&\second{0.669} & \second{0.908}	&\second{0.907} &\second{0.863}	&\second{0.843} &\first{0.668}	&\first{0.236}  &\first{0.823}	&\second{0.429} &\first{0.755}	&\first{0.470} &\first{1.7}\\
 
\hline
\end{tabular}
}
\label{tab:hec_performance}
\vspace{-1mm}
\end{table*}

\subsection{Hyperedge Link Prediction} \label{exp:link-prediction}
Hyperedge link prediction in attributed hypergraphs seeks to identify whether a group of nodes forms a real hyperedge using node embeddings~\cite{patil2020negative, leeVilLainSelfSupervisedLearning2024}. For each dataset, we divide hyperedges into training and test sets, using an 80\%/20\% split for smaller datasets and 98\%/2\% for larger ones like Amazon and MAG-PM. For each real hyperedge, we create a negative counterpart by randomly selecting nodes to match its size. Node embeddings are derived from the training set's real hyperedges and full attribute data, excluding test hyperedges and labels. A linear binary classifier is trained to differentiate real from negative hyperedges, using max-min aggregation of node embeddings as input. This model is tested on the test set, predicting real and negative hyperedges. We repeat this process over 10 random splits, averaging the results. Performance is assessed by accuracy (Acc) and area under the ROC curve (AUC), with higher scores indicating better performance.

Table \ref{tab:lp_performance} shows that \ours ranks highest overall with a score of 1.4, significantly outperforming the strongest baseline, \hypeboy, which has a rank of 5.4. While \oursbase performs well on smaller datasets, it struggles with larger ones. In contrast, \ours maintains top performance on large datasets like Amazon and MAG-PM, where other methods falter. 
{For instance, on Recipe, \ours exceeds \lightne by 12.1\% in accuracy and 14.1\% in AUC.} 
On the large Amazon dataset, \ours achieves 90.9\% accuracy and 96.5\% AUC, improving by 17.7\% and 14.5\% over the runner-up, \lightne, which scores 73.2\% accuracy and 82.0\% AUC. These results confirm that \ours generates high-quality node embeddings, validating the effectiveness of our proposed node similarity measure and embedding objective.

\subsection{Hyperedge Classification}\label{exp:edge_embedding_performance}

\input{figure_revision/new_time}

We evaluate hyperedge embeddings using a classification task that predicts a hyperedge's label from its embedding vector. Hyperedges are split into training and test sets with a 20\%/80\% ratio, except for Amazon and MAG-PM, which use a 2\%/98\% split due to their size. Embeddings are computed from the attributed hypergraph without label information. A linear classifier is trained on the training set, using hyperedge embeddings as input and their labels as targets. Performance is assessed on the test set, averaged over 10 random splits, and measured by MiF1 and MaF1. 
Table \ref{tab:hec_performance} shows that \ours ranks first overall with a score of 1.7, significantly outperforming the closest competitor, \lightne, which ranks at 4.9. On large datasets like Amazon and MAG-PM, most competitors fail due to time or memory constraints. Unlike \oursbase, which struggles with larger datasets, \ours excels on both small and large datasets, thanks to its efficient approximation techniques in Section \ref{sec:finalmethod}. 
Compared to the runner-up baseline, \ours improves MiF1 by 4.5\% and MaF1 by 4.9\% on Cora-CA, and by large margins of 21.4\% in MiF1 and 15.8\% in MaF1 on the Recipe dataset. 
On the large MAG-PM dataset, \ours outperforms \lightne by 25.3\% in MiF1 and 29.1\% in MaF1.  This suggests that simply averaging node embeddings, as in baseline methods, is insufficient for hyperedge embedding. The performance of \ours shows the effectiveness of the \heprox similarity objective and approximation techniques in Sections \ref{sec:edgeSim} and \ref{sec:finalmethod}.

\subsection{Embedding Efficiency}\label{exp:efficiency}
Figure \ref{fig:dualemb_time} reports the time of all methods to generate node and hyperedge embeddings across the ten datasets, with each chart’s y-axis showing running time in seconds on a logarithmic scale and stars marking the top-performing competitors in all three tasks. 

Observe that (i) \ours consistently outperforms all competitors in terms of efficiency across all datasets, regardless of the competitors' quality; more importantly, (ii) \ours demonstrates a significant speed advantage over the most effective competitors marked by stars in Figure \ref{fig:dualemb_time}, often being faster by orders of magnitude.
Taking the Citeseer dataset as example, Figure \ref{fig:time-emb-citeseer-cc} shows that \ours is $422.5\times$ faster than \biane (3rd place in node classification), $15.3\times$ faster than \aneci (3rd place in hyperedge link prediction), and $70.2\times$ faster than \hypeboy (3rd place in hyperedge classification), while \ours outperforms all baselines in these tasks.
In Figure \ref{fig:time-emb-20news}, \ours takes just 0.951 seconds compared to 13,536 seconds for \netmf, the runner-up in embedding quality. This is because \netmf operates on a dense clique-expansion graph reduced from the hyperedges, which incurs a quadratic complexity for the factorization-based algorithm. On the million-scale Amazon dataset, \ours is much faster than the competitors, such as \lightne, with an average rank of 6.0, compared to the 1.4 average rank of \ours in Table \ref{tab:nc_performance}.  
These results underscore the combination of high-quality embeddings and excellent efficiency achieved by \ours.%

\subsection{Experimental Analysis} \label{exp:abaltion} 

\revision{
\noindent\textbf{Scalability test.} 
We assess scalability on synthetic attributed hypergraphs with the number of nodes $n$ ranging from 2 to 10 million. Each hypergraph is generated as a 3-uniform hypergraph with $n$ hyperedges of size 3~\cite{gaoHGNNGeneralHypergraph2023}, and each node is assigned 100 random binary attributes. Figure~\ref{fig:scalability} shows the time and memory usage of \ours against the scalable baseline \lightne on CPU. \ours exhibits near-linear scalability and outperforms \lightne in both metrics, confirming the complexity analysis in Section~\ref{sec:finalAlg} and demonstrating the efficiency of \ours for large datasets in practice. \definecolor{darkblue}{RGB}{0,0,139}
\definecolor{darkgray176}{RGB}{176,176,176}
\definecolor{darkgreen}{RGB}{0,100,0}
\definecolor{darkorange25512714}{RGB}{255,127,14}
\definecolor{darkred}{RGB}{139,0,0}
\definecolor{forestgreen4416044}{RGB}{44,160,44}
\definecolor{goldenrod}{RGB}{218,165,32}
\definecolor{lightgray204}{RGB}{204,204,204}
\definecolor{lightseagreen0171189}{RGB}{0,171,189}
\definecolor{lightsteelblue161199224}{RGB}{161,199,224}
\definecolor{magenta}{RGB}{255,0,255}
\definecolor{steelblue31119180}{RGB}{31,119,180}
\definecolor{teal2110129}{RGB}{2,110,129}
\definecolor{myred}{HTML}{fd7f6f}
\definecolor{mywhite}{HTML}{D8D8D8}
\definecolor{myorange}{HTML}{D7191C}
\definecolor{myblue}{HTML}{7eb0d5}
\definecolor{mygreen}{HTML}{b2e061}
\definecolor{mypurple}{HTML}{bd7ebe}
\definecolor{myorange}{HTML}{ffb55a}
\definecolor{myyellow}{HTML}{ffee65}
\definecolor{mypurple2}{HTML}{beb9db}
\definecolor{mypink}{HTML}{fdcce5}
\definecolor{mycyan}{HTML}{8bd3c7}
\definecolor{mycyan2}{HTML}{00ffff}
\definecolor{myblue2}{HTML}{115f9a}
\definecolor{myred2}{HTML}{c23728}
\begin{figure}[!t]
\centering
\begin{small}

\begin{tikzpicture}
    \begin{customlegend}[
        legend entries={\ours, \lightne},
        legend columns=4,
        legend style={at={(0.05,1.05)},anchor=north,draw=none,font=\footnotesize, column sep=0.05cm,row sep=0.05cm}]
    \addlegendimage{line width=0.3mm,mark size=3pt,mark=square,color=teal}
    \addlegendimage{line width=0.3mm,mark size=3pt,mark=x,color=orange}
    \end{customlegend}
\end{tikzpicture}
\\[-\lineskip]
\vspace{-4mm}
\subfloat{
\resizebox{0.2\textwidth}{!}{
\begin{tikzpicture}[every mark/.append style={mark size=2pt}]

\definecolor{darkgrey176}{RGB}{176,176,176}
\definecolor{steelblue31119180}{RGB}{31,119,180}

\begin{axis}[
height=\columnwidth/2.8,
width=\columnwidth/2,
axis lines=left,
tick align=outside,
tick pos=left,
x grid style={darkgrey176},
xlabel={$n/10^6$},
xmin=0, xmax=11990000,
xtick style={color=black},
xtick={-2000000,0,2000000,4000000,6000000,8000000,10000000,12000000},
xticklabels={\ensuremath{-}2,0,2,4,6,8,10,12},
scaled x ticks=false,
y grid style={darkgrey176},
ylabel={time (s)},
ymin=-106.00355, ymax=4950.59355,
ytick style={color=black},
every axis y label/.style={at={(current axis.north west)},right=-3mm,above=1mm},
every axis x label/.style={at={{(0.68,0.0)}},right=13mm,below=0mm},
]
\addplot [line width=0.3mm,mark=square,color=teal]
table[row sep=crcr] {
2000000 386.4397\\
4000000 789.1107\\
6000000 1201.6772\\
8000000 1675.7615\\
10000000 2133.2827\\
};
\addplot [line width=0.3mm,mark=x,color=orange]
table[row sep=crcr] {
2000000 708.8777\\
4000000 1527.4336\\
6000000 2432.9860\\
8000000 3233.3741\\
10000000 4470.9281\\
};
\end{axis}

\end{tikzpicture}

}}
\subfloat{
\resizebox{0.2\textwidth}{!}{
\begin{tikzpicture}

\definecolor{darkgrey176}{RGB}{176,176,176}
\definecolor{steelblue31119180}{RGB}{31,119,180}

\begin{axis}[
axis lines=left,
height=\columnwidth/2.8,
width=\columnwidth/2,
tick align=outside,
tick pos=left,
x grid style={darkgrey176},
xlabel={$n/10^6$},
xmin=0, xmax=11990000,
xtick style={color=black},
xtick={-2000000,0,2000000,4000000,6000000,8000000,10000000,12000000},
xticklabels={\ensuremath{-}2,0,2,4,6,8,10,12},
scaled x ticks=false,
y grid style={darkgrey176},
ylabel={RAM (GB)},
ymin=-10.12249, ymax=337.83449,
ytick style={color=black},
every axis y label/.style={at={(current axis.north west)},right=0mm,above=1mm},
every axis x label/.style={at={{(0.68,0.0)}},right=13mm,below=0mm},
]
\addplot [line width=0.3mm,mark=square,color=teal]
table[row sep=crcr] {
2000000 24.6745\\
4000000 48.9083\\
6000000 73.1697\\
8000000 97.5176\\
10000000 121.8819\\
};
\addplot [line width=0.3mm,mark=x,color=orange]
table[row sep=crcr] {
2000000 49.6800\\
4000000 98.7200\\
6000000 149.1000\\
8000000 200.2000\\
10000000 261.7000\\
};
\end{axis}

\end{tikzpicture}

}}

\vspace{-3mm} 
\caption{\revision{Scalability Test.}} \label{fig:scalability}
\vspace{-2mm} 
\end{small} 
\end{figure}
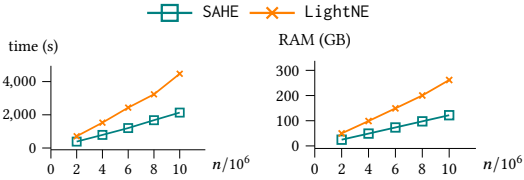 

}

\begin{table}[t]
\centering
\captionsetup[subfloat]{font=small}
\caption{\revision{Approximation Error (MAE).}}
\label{tab:mae}
\vspace{-3.5mm}
\setlength{\tabcolsep}{7pt}     \renewcommand{\arraystretch}{0.9} %
\resizebox{0.65\columnwidth}{!}{\begin{tabular}{|l|c|c|c|c|}
\hline
 & \multicolumn{2}{c|}{\textbf{\nprox}} & \multicolumn{2}{c|}{\textbf{\heprox}} \\
\cline{2-5}
\textbf{Dataset} & {\oursbase} & {\ours} & {\oursbase} & {\ours} \\
\hline
DBLP-CA & {0.0887} & {0.1281} & {0.0795} & {0.2141} \\
Cora-CA & {0.0965} & {0.1384} & {0.0770} & {0.2761} \\
Cora-CC & {0.0970} & {0.1546} & {0.0551} & {0.1714} \\
Citeseer & {0.0927} & {0.1446} & {0.0629} & {0.2096} \\
\hline
\end{tabular}}
\vspace{-2mm}
\end{table}

\revision{
\stitle{Approximation error} 
\ours improves efficiency by introducing acceptable approximation errors compared to \oursbase, which directly computes and factorizes similarity matrices. Table~\ref{tab:mae} quantifies this loss by reporting the mean absolute error (MAE) between normalized \nprox and \heprox matrices and their embedding dot product matrices. Specifically, similarity matrices are normalized by their diagonal mean to align self-similarity scales, and MAE is computed as the difference between the embedding dot product and the similarity matrices. Results show low errors for both methods, with \oursbase achieving slightly lower MAE. This confirms \ours effectively approximates similarity measures with small errors, enabling comparable effectiveness while ensuring efficiency.}

\definecolor{darkblue}{RGB}{0,0,139}
\definecolor{darkgray176}{RGB}{176,176,176}
\definecolor{darkgreen}{RGB}{0,100,0}
\definecolor{darkorange25512714}{RGB}{255,127,14}
\definecolor{darkred}{RGB}{139,0,0}
\definecolor{forestgreen4416044}{RGB}{44,160,44}
\definecolor{goldenrod}{RGB}{218,165,32}
\definecolor{lightgray204}{RGB}{204,204,204}
\definecolor{lightseagreen0171189}{RGB}{0,171,189}
\definecolor{lightsteelblue161199224}{RGB}{161,199,224}
\definecolor{magenta}{RGB}{255,0,255}
\definecolor{steelblue31119180}{RGB}{31,119,180}
\definecolor{teal2110129}{RGB}{2,110,129}
\definecolor{myred}{HTML}{fd7f6f}
\definecolor{mywhite}{HTML}{D8D8D8}
\definecolor{myorange}{HTML}{D7191C}
\definecolor{myblue}{HTML}{7eb0d5}
\definecolor{mygreen}{HTML}{b2e061}
\definecolor{mypurple}{HTML}{bd7ebe}
\definecolor{myorange}{HTML}{ffb55a}
\definecolor{myyellow}{HTML}{ffee65}
\definecolor{mypurple2}{HTML}{beb9db}
\definecolor{mypink}{HTML}{fdcce5}
\definecolor{mycyan}{HTML}{8bd3c7}
\definecolor{mycyan2}{HTML}{00ffff}
\definecolor{myblue2}{HTML}{115f9a}
\definecolor{myred2}{HTML}{c23728}
\begin{figure}[!t]
\centering
\begin{small}
\resizebox{0.22\textwidth}{!}{
\begin{tikzpicture}
    \begin{customlegend}[legend columns=3,
        legend entries={MiF1,Acc,time},
        legend columns=4,
        legend style={at={(0.45,1.35)},anchor=north,draw=none,font=\small,column sep=0.15cm}]
    \addlegendimage{line width=0.3mm,mark size=3pt,mark=square,color=teal}
    \addlegendimage{line width=0.3mm,mark size=3pt,mark=triangle,color=orange}
    \addlegendimage{line width=0.3mm,mark size=3pt,mark=o,color=purple}
    \end{customlegend}
\end{tikzpicture}}
\\[-\lineskip]
\vspace{-4mm}
\subfloat[Cora-CA]{
\resizebox{0.19\textwidth}{!}{
\begin{tikzpicture}[scale=0.93,every mark/.append style={mark size=2pt}]
    \begin{axis}[
    height=\columnwidth/2.4,
    width=\columnwidth/1.8,
        tick align=center,
        xmin=0.5, xmax=6.5,
        ymin=0.52, ymax=0.88,
        xtick={1,2,3,4,5,6},
        xtick pos=left,
        xticklabel style = {font=\normalsize},
        xticklabels={2,5,10,20,50,100},
        ytick={0.6,0.7,0.8},
        ytick pos=left,
        xlabel = {$K$},
        every axis x label/.style={font=\normalsize,at={{(0.68,0.0)}},right=12mm,below=0.3mm},
        scaled y ticks = false,
        every axis y label/.style={at={(current axis.north west)},right=5mm,above=0mm},
        label style={font=\normalsize},
        tick label style={font=\normalsize},
    ]
    \addplot[line width=0.3mm,mark=square,color=teal]  %
    plot coordinates {
    (1,0.720)
    (2,0.737)
    (3,0.753)
    (4,0.744)
    (5,0.741)
    (6,0.722)
    };    
    \addplot[line width=0.3mm,mark=triangle,color=orange] %
    plot coordinates {
    (1,0.725)
    (2,0.751)
    (3,0.766)
    (4,0.762)
    (5,0.742)
    (6,0.719)
    };
    \end{axis}

    \begin{axis}[
    height=\columnwidth/2.4,
    width=\columnwidth/1.8,
        axis lines=left,
        xmin=0.5, xmax=6.5,
        ymin=0.012, ymax=0.088,
        xtick = \empty,
        xticklabel style = {font=\normalsize},
        ytick={0.02,0.04,0.06,0.08},
        yticklabels = {0.02,0.04,0.06,0.08},
        scaled y ticks = false,
        ytick pos=right,
        ylabel={time} (s),
        every axis y label/.style={at={(current axis.north west)},right=30mm,above=1mm},
        label style={font=\normalsize},
        tick label style={font=\normalsize},
    ]
    \addplot[line width=0.3mm,mark=o,color=purple]  %
    plot coordinates {
    (1,0.0271)
    (2,0.0274)
    (3,0.0276)
    (4,0.0341)
    (5,0.0562)
    (6,0.0706)
    };    
    \end{axis}
\end{tikzpicture}
}}
\hspace{2mm} 
\subfloat[Amazon]{
\resizebox{0.18\textwidth}{!}{
\begin{tikzpicture}[scale=0.93,every mark/.append style={mark size=2pt}]
    \begin{axis}[
    height=\columnwidth/2.4,
    width=\columnwidth/1.8,
        tick align=center,
        xmin=0.5, xmax=6.5,
        ymin=0.4, ymax=1.0,
        xtick={1,2,3,4,5,6},
        xtick pos=left,
        xticklabel style = {font=\normalsize},
        xticklabels={2,5,10,20,50,100},
        ytick={0.5,0.7,0.9},
        ytick pos=left,
        xlabel = {$K$},
        every axis x label/.style={font=\normalsize,at={{(0.68,0.0)}},right=12mm,below=0.3mm},
        scaled y ticks = false,
        every axis y label/.style={at={(current axis.north west)},right=5mm,above=0mm},
        label style={font=\normalsize},
        tick label style={font=\normalsize},
    ]
    \addplot[line width=0.3mm,mark=square,color=teal]  %
    plot coordinates {
    (1,0.543)
    (2,0.693)
    (3,0.718)
    (4,0.722)
    (5,0.706)
    (6,0.692)
    };    
    \addplot[line width=0.3mm,mark=triangle,color=orange] %
    plot coordinates {
    (1,0.903)
    (2,0.909)
    (3,0.909)
    (4,0.901)
    (5,0.896)
    (6,0.891)
    };
    \end{axis}

    \begin{axis}[
    height=\columnwidth/2.4,
    width=\columnwidth/1.8,
        axis lines=left,
        xmin=0.5, xmax=6.5,
        ymin=46, ymax=65,
        xtick = \empty,
        xticklabel style = {font=\normalsize},
        ytick = {48,53,58,63},
        yticklabels = {48,53,58,63},
        scaled y ticks = false,
        ytick pos=right,
        ylabel={time} (s),
        every axis y label/.style={at={(current axis.north west)},right=30mm,above=1mm},
        label style={font=\normalsize},
        tick label style={font=\normalsize},
    ]
    \addplot[line width=0.3mm,mark=o,color=purple]  %
    plot coordinates {
    (1,50.5280)
    (2,51.5309)
    (3,52.0086)
    (4,52.7105)
    (5,55.4476)
    (6,60.8413)
    };    
    \end{axis}
\end{tikzpicture}
}}
\end{small}
\vspace{-3mm}
\caption{Varying $K$.} \label{fig:param-knn-small}
\vspace{-3mm}
\end{figure}
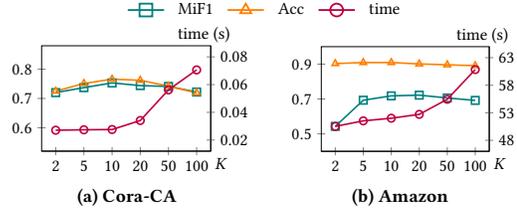

\stitle{Varying $K$} Figure \ref{fig:param-knn-small} shows the MiF1 for node classification, Acc for hyperedge link prediction, and the time to construct attribute-based hyperedges in 
$\ES_K$ for the Cora-CA and Amazon datasets. As 
$K$ varies from 2 to 100, time costs rise significantly, especially for 
$K>20$. Embedding quality improves notably as 
$K$ increases from 2 to 10, highlighting the importance of incorporating attribute similarity. However, beyond 
$K=10$, the metrics stabilize and then decline, likely due to the inclusion of nodes with dissimilar attributes, which introduces noise. Thus, we set 
parameter $K$ to 10 over all datasets.

\stitle{Varying $\beta$} Parameter $\beta$  balances  attribute-based hyperedges $\ES_K$ and original hyperedges $\ES$ in $\HGA$.   Figure \ref{fig:param-beta} shows that as $\beta$ increases from 0.1 to 1, micro-F1  for node embedding generally increases, and stabilizes beyond 1.0 on most datasets, but declines for  Cora-CC and MAG-PM when $\beta$ reaches 10.0. The accuracy of hyperedge link prediction (Acc) increases on 20News and Amazon when $\beta$ varies from 0.1 to 1.0, and then remains stable. We set $\beta = 1$ by default.

\definecolor{darkblue}{RGB}{0,0,139}
\definecolor{darkgray176}{RGB}{176,176,176}
\definecolor{darkgreen}{RGB}{0,100,0}
\definecolor{darkorange25512714}{RGB}{255,127,14}
\definecolor{darkred}{RGB}{139,0,0}
\definecolor{forestgreen4416044}{RGB}{44,160,44}
\definecolor{goldenrod}{RGB}{218,165,32}
\definecolor{lightgray204}{RGB}{204,204,204}
\definecolor{lightseagreen0171189}{RGB}{0,171,189}
\definecolor{lightsteelblue161199224}{RGB}{161,199,224}
\definecolor{magenta}{RGB}{255,0,255}
\definecolor{steelblue31119180}{RGB}{31,119,180}
\definecolor{teal2110129}{RGB}{2,110,129}
\definecolor{myred}{HTML}{fd7f6f}
\definecolor{mywhite}{HTML}{D8D8D8}
\definecolor{myorange}{HTML}{D7191C}
\definecolor{myblue}{HTML}{7eb0d5}
\definecolor{mygreen}{HTML}{b2e061}
\definecolor{mypurple}{HTML}{bd7ebe}
\definecolor{myorange}{HTML}{ffb55a}
\definecolor{myyellow}{HTML}{ffee65}
\definecolor{mypurple2}{HTML}{beb9db}
\definecolor{mypink}{HTML}{fdcce5}
\definecolor{mycyan}{HTML}{8bd3c7}
\definecolor{mycyan2}{HTML}{00ffff}
\definecolor{myblue2}{HTML}{115f9a}
\definecolor{myred2}{HTML}{c23728}
\begin{figure}[!t]
\centering
\begin{small}
\resizebox{0.44\textwidth}{!}{
\begin{tikzpicture}
    \begin{customlegend}[
        legend entries={DBLP-CA, Cora-CA, Cora-CC, Citeseer, Mushroom, 20News, DBLP, Recipe, Amazon, MAG-PM},
        legend columns=5,
        legend style={at={(0.05,1.05)},anchor=north,draw=none,font=\small, column sep=0.05cm,row sep=0.05cm}]
    \addlegendimage{line width=0.3mm,mark size=3pt,mark=square,color=teal}
    \addlegendimage{line width=0.3mm,mark size=3pt,mark=triangle,color=orange}
    \addlegendimage{line width=0.3mm,mark size=3pt,mark=o,color=mypurple}
    \addlegendimage{line width=0.3mm,mark size=3pt,mark=diamond,color=magenta}

    \addlegendimage{line width=0.3mm,mark size=3pt,mark=Mercedes star,color=black}
    
    \addlegendimage{line width=0.3mm,mark size=3pt,mark=x,color=myblue2}
    \addlegendimage{line width=0.3mm,mark size=3pt,mark=|,color=mycyan2}

    \addlegendimage{line width=0.3mm,mark size=3pt,mark=Mercedes star flipped,color=goldenrod}
        
    \addlegendimage{line width=0.3mm,mark size=3pt,mark=pentagon,color=darkgray176}
    \addlegendimage{line width=0.3mm,mark size=3pt,mark=10-pointed star,color=myred2}
    \end{customlegend}
\end{tikzpicture}}
\vspace{-4mm}

\subfloat{
\resizebox{0.18\textwidth}{!}{
\begin{tikzpicture}[scale=0.98,every mark/.append style={mark size=2pt}]
    \begin{axis}[
    height=\columnwidth/2.6,
    width=\columnwidth/1.8,
        axis lines=left,
        xmin=0.5, xmax=5.5,
        ymin=0.52, ymax=1.03,
        xtick={1,2,3,4,5},
        xticklabel style = {font=\normalsize},
        xticklabels={0.1,0.5,1.0,2.0,10.0},
        ytick={0.6,0.7,0.8,0.9,1.0},
        scaled y ticks = false,
        ylabel={MiF1},
        every axis y label/.style={at={(current axis.north west)},right=-3mm,above=1mm},
        label style={font=\normalsize},
        tick label style={font=\normalsize},
        xlabel={$\beta$},
        every axis x label/.style={font=\normalsize,at={{(0.68,0.0)}},right=12mm,below=1mm},
    ]
    \addplot[line width=0.3mm,mark=square,color=teal]  %
    plot coordinates {
    (1,0.817)
    (2,0.822)
    (3,0.824)
    (4,0.827)
    (5,0.828)
    };    
    \addplot[line width=0.3mm,mark=triangle,color=orange] %
    plot coordinates {
    (1,0.716)
    (2,0.744)
    (3,0.753)
    (4,0.757)
    (5,0.764)
    };
    \addplot[line width=0.3mm,mark=o,color=mypurple] %
    plot coordinates {
    (1,0.719)
    (2,0.737)
    (3,0.742)
    (4,0.732)
    (5,0.730)
    };
    \addplot[line width=0.3mm,mark=diamond,color=magenta] %
    plot coordinates {
    (1,0.673)
    (2,0.678)
    (3,0.690)
    (4,0.682)
    (5,0.683)
    };
    \addplot[line width=0.3mm,mark=x,color=myblue2] %
    plot coordinates {
    (1,0.773)
    (2,0.783)
    (3,0.786)
    (4,0.785)
    (5,0.779)
    };
    \addplot[line width=0.3mm,mark=|,color=mycyan2] %
    plot coordinates {
    (1,0.748)
    (2,0.860)
    (3,0.867)
    (4,0.879)
    (5,0.884)
    };
    \addplot[line width=0.3mm,mark=pentagon,color=darkgray176] %
    plot coordinates {
    (1,0.637)
    (2,0.685)
    (3,0.718)
    (4,0.725)
    (5,0.717)
    };
    \addplot[line width=0.3mm,mark=10-pointed star,color=myred2] %
    plot coordinates {
    (1,0.569)
    (2,0.665)
    (3,0.698)
    (4,0.683)
    (5,0.670)
    };

    \addplot[line width=0.3mm,mark=Mercedes star,color=black] %
    plot coordinates {
    (1,0.997)
    (2,0.999)
    (3,0.999)
    (4,0.999)
    (5,0.999)
    };

    \addplot[line width=0.3mm,mark=Mercedes star flipped,color=goldenrod] %
    plot coordinates {
    (1,0.683)
    (2,0.630)
    (3,0.630)
    (4,0.622)
    (5,0.608)
    };
    
    \end{axis}
\end{tikzpicture}
}}
\subfloat{
\resizebox{0.18\textwidth}{!}{
\begin{tikzpicture}[scale=0.98,every mark/.append style={mark size=2pt}]
    \begin{axis}[
    height=\columnwidth/2.6,
    width=\columnwidth/1.8,
        axis lines=left,
        xmin=0.5, xmax=5.5,
        ymin=0.55, ymax=1.03,
        xtick={1,2,3,4,5},
        xticklabel style = {font=\normalsize},
        xticklabels={0.1,0.5,1.0,2.0,10.0},
        ytick={0.6,0.7,0.8,0.9,1.0},
        scaled y ticks = false,
        ylabel={Acc},
        every axis y label/.style={at={(current axis.north west)},right=-3mm,above=1mm},
        label style={font=\normalsize},
        tick label style={font=\normalsize},
        xlabel={$\beta$},
        every axis x label/.style={font=\normalsize,at={{(0.68,0.0)}},right=12mm,below=1mm},
    ]
    \addplot[line width=0.3mm,mark=square,color=teal] %
    plot coordinates {
    (1,0.748)
    (2,0.766)
    (3,0.776)
    (4,0.781)
    (5,0.784)
    };    
    \addplot[line width=0.3mm,mark=triangle,color=orange] %
    plot coordinates {
    (1,0.750)
    (2,0.753)
    (3,0.766)
    (4,0.759)
    (5,0.759)
    };
    \addplot[line width=0.3mm,mark=o,color=mypurple] %
    plot coordinates {
    (1, 0.803)
    (2, 0.813)
    (3, 0.807)
    (4, 0.807)
    (5, 0.803)
    };
    \addplot[line width=0.3mm,mark=diamond,color=magenta] %
    plot coordinates {
    (1, 0.792)
    (2, 0.809)
    (3, 0.801)
    (4, 0.807)
    (5, 0.806)
    };
    \addplot[line width=0.3mm,mark=x,color=myblue2] %
    plot coordinates {
    (1, 0.605)
    (2, 0.710)
    (3, 0.870)
    (4, 0.880)
    (5, 0.870)
    };
    \addplot[line width=0.3mm,mark=|,color=mycyan2] %
    plot coordinates {
    (1, 0.775)
    (2, 0.821)
    (3, 0.824)
    (4, 0.828)
    (5, 0.829)
    };
    \addplot[line width=0.3mm,mark=pentagon,color=darkgray176] %
    plot coordinates {
    (1,0.719)
    (2,0.906)
    (3,0.909)
    (4,0.905)
    (5,0.902)
    };
    \addplot[line width=0.3mm,mark=10-pointed star,color=myred2] %
    plot coordinates {
    (1,0.719)
    (2,0.769)
    (3,0.761)
    (4,0.755)
    (5,0.743)
    };
    
    \addplot[line width=0.3mm,mark=Mercedes star,color=black] %
    plot coordinates {
    (1,0.982)
    (2,0.985)
    (3,0.989)
    (4,0.987)
    (5,0.987)
    };

    \addplot[line width=0.3mm,mark=Mercedes star flipped,color=goldenrod] %
    plot coordinates {
    (1,0.773)
    (2,0.763)
    (3,0.763)
    (4,0.758)
    (5,0.758)
    };
    
    \end{axis}
\end{tikzpicture}
}}
\end{small}

\vspace{-4mm}
\caption{Varying $\beta$.} \label{fig:param-beta}
\vspace{-7mm}
\end{figure}
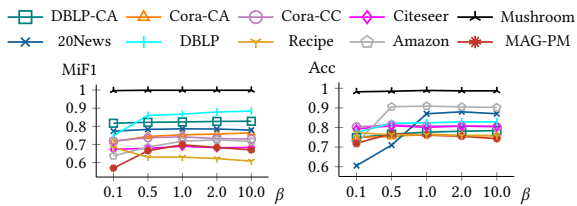
\definecolor{darkblue}{RGB}{0,0,139}
\definecolor{darkgray176}{RGB}{176,176,176}
\definecolor{darkgreen}{RGB}{0,100,0}
\definecolor{darkorange25512714}{RGB}{255,127,14}
\definecolor{darkred}{RGB}{139,0,0}
\definecolor{forestgreen4416044}{RGB}{44,160,44}
\definecolor{goldenrod}{RGB}{218,165,32}
\definecolor{lightgray204}{RGB}{204,204,204}
\definecolor{lightseagreen0171189}{RGB}{0,171,189}
\definecolor{lightsteelblue161199224}{RGB}{161,199,224}
\definecolor{magenta}{RGB}{255,0,255}
\definecolor{steelblue31119180}{RGB}{31,119,180}
\definecolor{teal2110129}{RGB}{2,110,129}
\definecolor{myred}{HTML}{fd7f6f}
\definecolor{mywhite}{HTML}{D8D8D8}
\definecolor{myorange}{HTML}{D7191C}
\definecolor{myblue}{HTML}{7eb0d5}
\definecolor{mygreen}{HTML}{b2e061}
\definecolor{mypurple}{HTML}{bd7ebe}
\definecolor{myorange}{HTML}{ffb55a}
\definecolor{myyellow}{HTML}{ffee65}
\definecolor{mypurple2}{HTML}{beb9db}
\definecolor{mypink}{HTML}{fdcce5}
\definecolor{mycyan}{HTML}{8bd3c7}
\definecolor{mycyan2}{HTML}{00ffff}
\definecolor{myblue2}{HTML}{115f9a}
\definecolor{myred2}{HTML}{c23728}
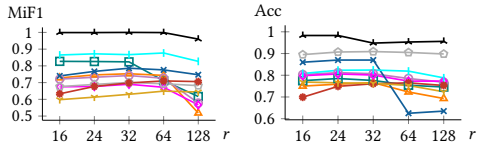
\begin{figure}[!t]
\centering
\begin{small}

\subfloat{
\resizebox{0.18\textwidth}{!}{
\begin{tikzpicture}[scale=0.98,every mark/.append style={mark size=2pt}]
    \begin{axis}[
    height=\columnwidth/2.6,
    width=\columnwidth/1.8,
        axis lines=left,
        xmin=0.5, xmax=5.5,
        ymin=0.475, ymax=1.03,
        xtick={1,2,3,4,5},
        xticklabel style = {font=\normalsize},
        xticklabels={16,24,32,64,128},
        ytick={0.5,0.6,0.7,0.8,0.9,1.0},
        scaled y ticks = false,
        ylabel={MiF1},
        every axis y label/.style={at={(current axis.north west)},right=-3mm,above=1mm},
        label style={font=\normalsize},
        tick label style={font=\normalsize},
        xlabel={$r$},
        every axis x label/.style={font=\normalsize,at={{(0.68,0.0)}},right=12mm,below=1mm},
    ]
    \addplot[line width=0.3mm,mark=square,color=teal]  %
    plot coordinates {
    (1,0.827)
    (2,0.826)
    (3,0.824)
    (4,0.711)
    (5,0.617)
    };    
    \addplot[line width=0.3mm,mark=triangle,color=orange] %
    plot coordinates {
    (1,0.728)
    (2,0.746)
    (3,0.753)
    (4,0.743)
    (5,0.518)
    };
    \addplot[line width=0.3mm,mark=o,color=mypurple] %
    plot coordinates {
    (1,0.721)
    (2,0.732)
    (3,0.742)
    (4,0.726)
    (5,0.578)
    };
    \addplot[line width=0.3mm,mark=diamond,color=magenta] %
    plot coordinates {
    (1,0.675)
    (2,0.677)
    (3,0.690)
    (4,0.672)
    (5,0.567)
    };
    \addplot[line width=0.3mm,mark=x,color=myblue2] %
    plot coordinates {
    (1,0.740)
    (2,0.767)
    (3,0.786)
    (4,0.776)
    (5,0.747)
    };
    \addplot[line width=0.3mm,mark=|,color=mycyan2] %
    plot coordinates {
    (1,0.865)
    (2,0.872)
    (3,0.867)
    (4,0.877)
    (5,0.827)
    };
    \addplot[line width=0.3mm,mark=pentagon,color=darkgray176] %
    plot coordinates {
    (1,0.676)
    (2,0.689)
    (3,0.701)
    (4,0.693)
    (5,0.682)
    };
    \addplot[line width=0.3mm,mark=10-pointed star,color=myred2] %
    plot coordinates {
    (1,0.633)
    (2,0.677)
    (3,0.698)
    (4,0.710)
    (5,0.706)
    };
    \addplot[line width=0.3mm,mark=Mercedes star,color=black] %
    plot coordinates {
    (1,0.999)
    (2,0.999)
    (3,1.000)
    (4,0.999)
    (5,0.960)
    };

    \addplot[line width=0.3mm,mark=Mercedes star flipped,color=goldenrod] %
    plot coordinates {
    (1,0.598)
    (2,0.613)
    (3,0.630)
    (4,0.649)
    (5,0.644)
    };
    \end{axis}
\end{tikzpicture}
}}
\subfloat{
\resizebox{0.18\textwidth}{!}{
\begin{tikzpicture}[scale=0.98,every mark/.append style={mark size=2pt}]
    \begin{axis}[
    height=\columnwidth/2.6,
    width=\columnwidth/1.8,
        axis lines=left,
        xmin=0.5, xmax=5.5,
        ymin=0.595, ymax=1.03,
        xtick={1,2,3,4,5},
        xticklabel style = {font=\normalsize},
        xticklabels={16,24,32,64,128},
        ytick={0.6,0.7,0.8,0.9,1.0},
        scaled y ticks = false,
        ylabel={Acc},
        every axis y label/.style={at={(current axis.north west)},right=-3mm,above=1mm},
        label style={font=\normalsize},
        tick label style={font=\normalsize},
        xlabel={$r$},
        every axis x label/.style={font=\normalsize,at={{(0.68,0.0)}},right=12mm,below=1mm},
    ]
    \addplot[line width=0.3mm,mark=square,color=teal] %
    plot coordinates {
    (1,0.775)
    (2,0.785)
    (3,0.776)
    (4,0.754)
    (5,0.746)
    };    
    \addplot[line width=0.3mm,mark=triangle,color=orange] %
    plot coordinates {
    (1,0.750)
    (2,0.759)
    (3,0.766)
    (4,0.723)
    (5,0.693)
    };
    \addplot[line width=0.3mm,mark=o,color=mypurple] %
    plot coordinates {
    (1, 0.803)
    (2, 0.812)
    (3, 0.807)
    (4, 0.785)
    (5, 0.768)
    };
    \addplot[line width=0.3mm,mark=diamond,color=magenta] %
    plot coordinates {
    (1, 0.798)
    (2, 0.806)
    (3, 0.801)
    (4, 0.773)
    (5, 0.773)
    };
    \addplot[line width=0.3mm,mark=x,color=myblue2] %
    plot coordinates {
    (1, 0.860)
    (2, 0.870)
    (3, 0.870)
    (4, 0.625)
    (5, 0.635)
    };
    \addplot[line width=0.3mm,mark=|,color=mycyan2] %
    plot coordinates {
    (1, 0.808)
    (2, 0.823)
    (3, 0.824)
    (4, 0.819)
    (5, 0.788)
    };
    \addplot[line width=0.3mm,mark=pentagon,color=darkgray176] %
    plot coordinates {
    (1,0.895)
    (2,0.907)
    (3,0.909)
    (4,0.905)
    (5,0.898)
    };
    \addplot[line width=0.3mm,mark=10-pointed star,color=myred2] %
    plot coordinates {
    (1,0.699)
    (2,0.749)
    (3,0.761)
    (4,0.765)
    (5,0.753)
    };

    \addplot[line width=0.3mm,mark=Mercedes star,color=black] %
    plot coordinates {
    (1,0.983)
    (2,0.983)
    (3,0.948)
    (4,0.953)
    (5,0.956)
    };

    \addplot[line width=0.3mm,mark=Mercedes star flipped,color=goldenrod] %
    plot coordinates {
    (1,0.769)
    (2,0.765)
    (3,0.763)
    (4,0.753)
    (5,0.724)
    };
\end{axis}
\end{tikzpicture}
}}

\end{small}

\vspace{-4mm}
\caption{Varying $r$.} \label{fig:param-rank}
\submission{\vspace{-5mm}}
\report{\vspace{-2mm}}
\end{figure}

\report{

\report{\begin{table*}[!t] 
\centering 
\caption{Ablation analysis of \nprox on node classification performance.}\vspace{-3mm}
 
\resizebox{0.86\textwidth}{!}{
\renewcommand{\arraystretch}{0.95}
\setlength{\tabcolsep}{6pt}
\begin{tabular}{|c|cc|cc|cc|cc|cc|cc|cc|}
\hline
\multirow{2}{*}{\bf{Method}} & \multicolumn{2}{c|}{\bf{DBLP-CA}} & \multicolumn{2}{c|}{\bf{Cora-CA}} & \multicolumn{2}{c|}{\bf{Cora-CC}} & \multicolumn{2}{c|}{\bf{Citeseer}}  & \multicolumn{2}{c|}{\bf{Mushroom}} & \multicolumn{2}{c|}{\bf{20News}} & \multicolumn{2}{c|}{\bf{DBLP}}\\ \cline{2-15}

&MiF1 & MaF1 &MiF1 & MaF1 &MiF1 & MaF1 &MiF1 & MaF1 &MiF1 & MaF1 &MiF1 & MaF1 &MiF1 & MaF1 \\ \hline

\nprox-no-$\ESK$ &0.527	&0.492 &0.507	&0.462 &0.559	&0.530 &0.329	&0.278 &0.990	&0.990 &{0.796} &{0.769} &0.755	&0.733 \\
\nprox-1-hop &0.811 &0.803 &0.741	&0.720 &0.716	&0.696 &0.680	&0.618 &0.988	&0.988 &0.783	&0.755 &0.852	&0.843 \\
{\nprox} & \textbf{0.836} & \textbf{0.828} & \textbf{0.777} & \textbf{0.754} & \textbf{0.753} & \textbf{0.732} & \textbf{0.693} & \textbf{0.628} &\textbf{0.997}	&\textbf{0.997}  & \textbf{0.801} & \textbf{0.775} & \textbf{0.898} & \textbf{0.894} \\
\hline
\end{tabular}
}
\label{tab:ablation-node-classification}
\vspace{-2mm}
\end{table*}}

\begin{table}[!t]
\centering
\submission{\caption{\revision{Ablation analysis.}}\vspace{-4mm}}
\report{\caption{\revision{Ablation analysis of \heprox on hyperedge classification performance.}}\vspace{-4mm}}
\hspace{-3mm} 
\resizebox{1.02\linewidth}{!}{
\renewcommand{\arraystretch}{0.98}
\setlength{\tabcolsep}{1.5pt} 
\begin{tabular}{|c|cc|cc|cc|cc|cc|cc|} 
\hline
\multirow{2}{*}{\bf{Method}} & \multicolumn{2}{c|}{\bf{DBLP-CA}} & \multicolumn{2}{c|}{\bf{Cora-CA}} & \multicolumn{2}{c|}{\bf{Cora-CC}} & \multicolumn{2}{c|}{\bf{Citeseer}}  & \multicolumn{2}{c|}{\bf{Mushroom}}  & \multicolumn{2}{c|}{\bf{DBLP}}\\ \cline{2-13}

&MiF1 & MaF1 &MiF1 & MaF1 &MiF1 & MaF1 &MiF1 & MaF1 &MiF1 & MaF1 &MiF1 & MaF1 \\ \hline

\heprox-1-hop &0.648 & 0.605 &0.489 & 0.423 &0.765 &0.749 &0.610 &0.521 & 0.899 & 0.897 & 0.650 &0.552\\
\heprox-no-$\ESK$ &0.658 &0.615 &0.487 & 0.418 &0.821 &0.810 & 0.623 &0.549 & 0.921 & 0.920 & 0.747 & 0.716\\
{\heprox} &\textbf{0.858} &\textbf{0.838} &\textbf{0.775} &\textbf{0.740} &\textbf{0.852} &\textbf{0.846} &\textbf{0.770} &\textbf{0.684} &\textbf{0.926} &\textbf{0.925} &\textbf{0.908} &\textbf{0.898} \\
\hline
\end{tabular}
}
\label{tab:ablation-hyperedge-classification}
\vspace{-3mm}
\end{table}

}

\stitle{Varying $r$} The parameter $r$ represents the dimension of truncated SVD used for approximating \nprox and \heprox in Section \ref{sec:approxNonliear}. We vary $r$ for node classification (MiF1) and hyperedge link prediction (Acc), with results in Figure \ref{fig:param-rank}.
\revision{Increasing $r$ from 16 to 32 generally improves or stabilizes both metrics, except for Mushroom, where Acc drops after 24. Beyond 32, scores typically decline. Thus, we set $r=32$ by default and $r=16$ for the Mushroom dataset.}

\begin{figure}[!t]

\centering

\captionsetup[subfloat]{labelfont={small}, textfont={small}}
\subfloat{
\resizebox{0.33\linewidth}{!}{%
\includegraphics[width=0.15\textwidth]{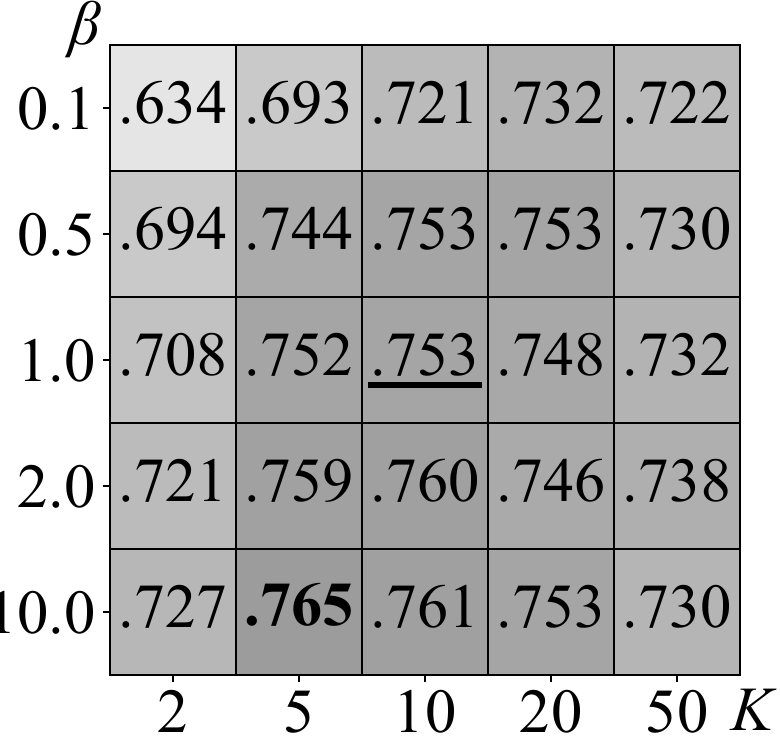}
 }}
 \captionsetup[subfloat]{labelfont={small}, textfont={small}}\hspace{-2mm}
\subfloat{
\resizebox{0.33\linewidth}{!}{%
\includegraphics[width=0.15\textwidth]{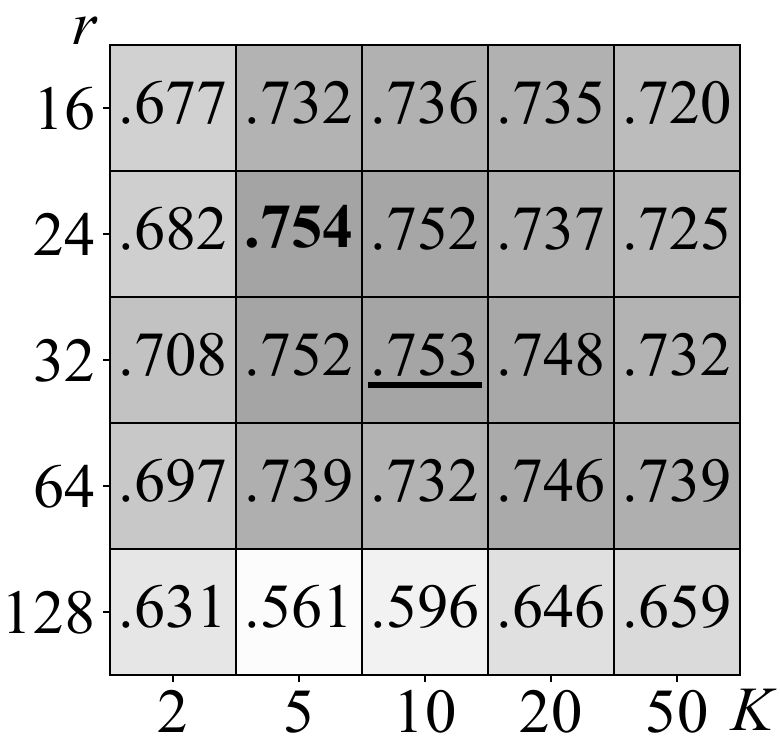}
 }}
\captionsetup[subfloat]{labelfont={small}, textfont={small}}\hspace{-2mm}
\subfloat{
\resizebox{0.33\linewidth}{!}{%
\includegraphics[width=0.15\textwidth]{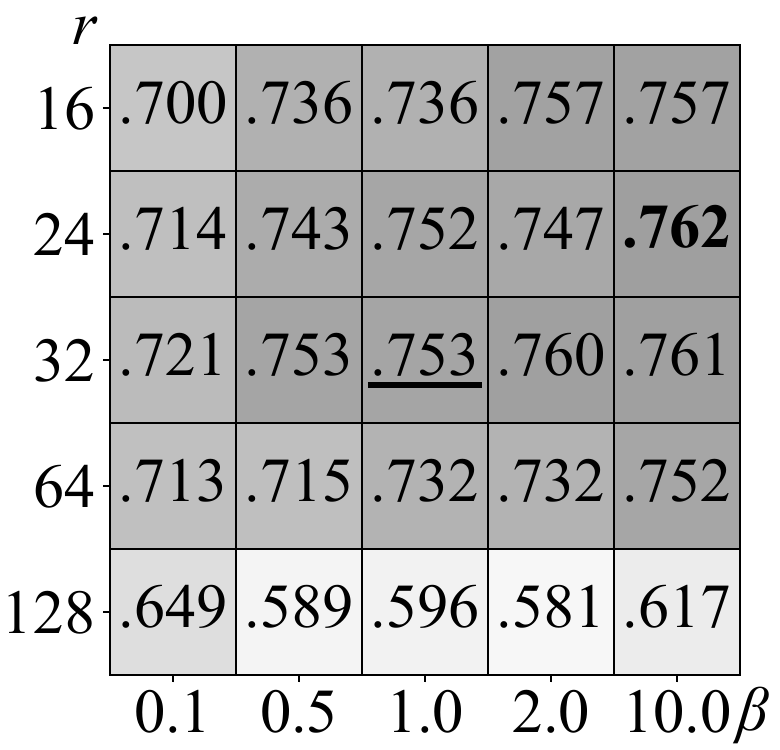}
 }}
\vspace{-3mm}
\caption{\revision{Heatmaps between $K$, $\beta$, and $r$ on Cora-CA for node classification. Darker shades indicate higher MiF1. \underline{Underlined} results are reported in \hardcode{Table~\ref{tab:nc_performance}}, {while the optimal parameter combinations are in bold}.}}
\label{fig:param-interaction} 
\vspace{-3mm}
\end{figure}

\revision{
\stitle{Parameter interaction} We analyze the interaction between parameters $K$, $\beta$, and $r$ using node classification on Cora-CA, shown in Figure~\ref{fig:param-interaction}. The heatmaps display Micro-F1 scores across parameter combinations. \ours consistently delivers high-quality embeddings, indicated by darker gray levels, confirming robustness to parameter variations. Underlined results are those reported in Table~\ref{tab:nc_performance}, while bold values represent optimal performance with fine-tuned parameters, surpassing defaults. 
This shows \ours delivers strong performance with default settings,  without extensive tuning, and provides guidance for adjusting parameters in practice.
\submission{More parameter analysis is in the technical report~\cite{report}.}}

\submission{
\revision{
\stitle{Ablation study}
We conduct an ablation study on the proposed \heprox measure by evaluating two variants: \heprox-no-$\ESK$, which excludes attribute-based hyperedges, and \heprox-1-hop, which restricts random walks to a single hop. Hyperedge embeddings derived from these versions are assessed via the classification task, with results in Table~\ref{tab:ablation-hyperedge-classification}, where the full \heprox outperforms both ablated versions, confirming the effectiveness of our approach. The ablation for \nprox is similar and reported in~\cite{report}.}
}

\report{
\revision{
\stitle{Ablation study} To validate our proposed similarity measure formulation, we evaluate two ablated versions: \nprox/\heprox-no-$\ESK$, which excludes attribute-based hyperedges, and \nprox/\heprox-1-hop, which restricts random walks to a single hop. Node and hyperedge embeddings derived from these similarity matrices are assessed via the classification task, with results in Tables~\ref{tab:ablation-node-classification}-\ref{tab:ablation-hyperedge-classification}. The full \nprox and \heprox measures generally outperform both ablated versions, confirming the effectiveness of our approach.
}}

\report{
\begin{table*}[!t]
\centering
\caption{Node classification performance for extended baselines.}\vspace{-3mm}

\resizebox{\textwidth}{!}{
\renewcommand{\arraystretch}{0.95}
\setlength{\tabcolsep}{3pt}
\begin{tabular}{|c|cc|cc|cc|cc|cc|cc|cc|cc|cc|cc|c|}
\hline
\multirow{2}{*}{\bf{Method}} & \multicolumn{2}{c|}{\bf{DBLP-CA}} & \multicolumn{2}{c|}{\bf{Cora-CA}} & \multicolumn{2}{c|}{\bf{Cora-CC}} & \multicolumn{2}{c|}{\bf{Citeseer}}  & \multicolumn{2}{c|}{\bf{Mushroom}} & \multicolumn{2}{c|}{\bf{20News}} & \multicolumn{2}{c|}{\bf{DBLP}} & \multicolumn{2}{c|}{\bf{Recipe}} & \multicolumn{2}{c|}{\bf{Amazon}} & \multicolumn{2}{c|}{\bf{MAG-PM}} & \multirow{2}{*}{\bf{Rank}} \\ \cline{2-21}

&MiF1 & MaF1 &MiF1 & MaF1 &MiF1 & MaF1 &MiF1 & MaF1 &MiF1 & MaF1 &MiF1 & MaF1 &MiF1 & MaF1 &MiF1 & MaF1 &MiF1 & MaF1 &MiF1 & MaF1 &\\ \hline
\texttt{Hyper2vec+} &0.772 &0.763 &0.696 &0.663 &0.643 &0.615 &0.629 &0.574 &0.935 &0.935 &0.776 &0.734 &0.863 &0.858 &- &- &- &- &- &- &\third{2.8}\\
\hypertovec &0.446 &0.410 &0.412 &0.365 &0.493 &0.460 &0.311 &0.258 &- &- &- &- &0.702 &0.672 &- &- &- &- &- &- &3.8\\ 
\texttt{TriCL+} &0.797 &0.788 &0.709	&0.678 &0.648	&0.629 &0.616	&0.559 &0.983	&0.983 &0.622  &0.567 &- &- &- &- &- &- &- &- &\second{2.6} \\
\tricl &0.787 &0.778 &0.702 &0.677 &0.668 &0.646 &0.540 &0.487 &0.978 &0.978 &0.761 &0.722 &- &- &- &- &- &- &- &- &\third{2.8} \\ 
\textbf{\ours} & {0.824} & {0.816} & {0.753} & {0.732} & {0.742} & {0.720} & {0.690} & {0.622}    &{0.999}	&{0.999}   & {0.786} & {0.748} & {0.867} & {0.859} &{0.630} &{0.236} &{0.718} &{0.396} &{0.698} &{0.451} &\first{\textbf{1.0}}\\
\hline
\end{tabular}
} 
\label{tab:added-baselines-node-classification}
\vspace{-1mm}
\end{table*}

\begin{table*}[!t]
\centering
\caption{Hyperedge link prediction performance for extended baselines.}\vspace{-3mm}

\resizebox{\textwidth}{!}{
\renewcommand{\arraystretch}{0.95}
\setlength{\tabcolsep}{3pt}
\begin{tabular}{|c|cc|cc|cc|cc|cc|cc|cc|cc|cc|cc|c|}
\hline
\multirow{2}{*}{\bf{Method}} & \multicolumn{2}{c|}{\bf{DBLP-CA}} & \multicolumn{2}{c|}{\bf{Cora-CA}} & \multicolumn{2}{c|}{\bf{Cora-CC}} & \multicolumn{2}{c|}{\bf{Citeseer}}  & \multicolumn{2}{c|}{\bf{Mushroom}} & \multicolumn{2}{c|}{\bf{20News}} & \multicolumn{2}{c|}{\bf{DBLP}} & \multicolumn{2}{c|}{\bf{Recipe}} & \multicolumn{2}{c|}{\bf{Amazon}} & \multicolumn{2}{c|}{\bf{MAG-PM}} & \multirow{2}{*}{\bf{Rank}} \\ \cline{2-21}

&Acc & AUC &Acc & AUC &Acc & AUC &Acc & AUC &Acc & AUC &Acc & AUC &Acc & AUC &Acc & AUC &Acc & AUC &Acc & AUC &\\ \hline

\texttt{Hyper2vec+} &0.735 &0.821 &0.613 &0.668 &0.699 &0.795 &0.712 &0.803 &0.932 & 0.980 &0.622 &0.749 &0.672 &0.747 &- &- &- &- &- &- &3.1\\ 
\hypertovec &0.631 &0.712 &0.667 &0.751 &0.715 &0.751 &0.669 &0.684 &- &- &- &- &0.704 &0.741 &- &- &- &- &- &- &3.6\\ 
\texttt{TriCL+} &0.747 &0.881 &0.721	&0.812 &0.767	&0.912 &0.776	&0.900 &0.933	&0.962 &0.523 &0.525 &- &- &- &- &- &- &- &- &\second{2.5}\\
\tricl &0.719 &0.808 &0.682 &0.738 &0.727 &0.837 &0.720 &0.824 &0.942 &0.988 &0.615 &0.858  &- &- &- &- &- &- &- &- &\third{2.8}\\ 
\textbf{\ours} &{0.776} &{0.890} &{0.766} &0.828 &0.807 &0.902 &0.801 &0.916 &{0.989} &{0.999} &{0.870} &{0.956} &{0.824} &{0.911} &0.763 &0.830 &0.909 &0.965 &0.761 &0.798 &\first{\textbf{1.1}}\\
\hline
\end{tabular}
}
\label{tab:added-baselines-link-prediction}
\vspace{-1mm}
\end{table*}

\begin{table*}[!t]
\centering
\caption{Hyperedge classification performance for extended baselines.}\vspace{-3mm}

\resizebox{\textwidth}{!}{
\renewcommand{\arraystretch}{0.96}
\setlength{\tabcolsep}{5pt}
\begin{tabular}{|c|cc|cc|cc|cc|cc|cc|cc|cc|cc|c|}
\hline
\multirow{2}{*}{\bf{Method}} & \multicolumn{2}{c|}{\bf{DBLP-CA}} & \multicolumn{2}{c|}{\bf{Cora-CA}} & \multicolumn{2}{c|}{\bf{Cora-CC}} & \multicolumn{2}{c|}{\bf{Citeseer}} & \multicolumn{2}{c|}{\bf{Mushroom}}  & \multicolumn{2}{c|}{\bf{DBLP}} & \multicolumn{2}{c|}{\bf{Recipe}} & \multicolumn{2}{c|}{\bf{Amazon}} & \multicolumn{2}{c|}{\bf{MAG-PM}} & \multirow{2}{*}{\bf{Rank}} \\ \cline{2-19}

&MiF1 & MaF1 &MiF1 & MaF1 &MiF1 & MaF1 &MiF1 & MaF1 &MiF1 & MaF1 &MiF1 & MaF1 &MiF1 & MaF1 &MiF1 & MaF1 &MiF1 & MaF1 & \\ \hline

\texttt{Hyper2vec+} &0.822 &0.798 &0.702 &0.654 &0.804 &0.799 &0.717 &0.637 &0.901 &0.900 &0.851 &0.833 &- &- &- &- &- &- &\second{2.2}\\
\hypertovec &0.569 &0.518 &0.439 &0.370 &0.794 &0.786 &0.589 &0.511 &- &- &0.599 &0.553 &- &- &- &- &- &- &3.8\\ 
\texttt{TriCL+} &0.803 &0.775 &0.646 &0.593 &0.824 &0.809 &0.670 &0.596 &0.831 &0.828 &- &- &- &- &- &- &- &- &\third{2.9}\\
\tricl &0.804 &0.778 &0.646 &0.590 &0.820 &0.808 &0.659 &0.579 &0.838 &0.834 &- &- &- &- &- &- &- &- &\third{2.9}\\ 
\textbf{\ours} &{0.854} &{0.836} &{0.764} &{0.711} &{0.850} &{0.839} &{0.756} &{0.669} &{0.908} &{0.907} &{0.863} &{0.843} &{0.668} &{0.236} &{0.823} &{0.429} &{0.755} &{0.470} &\first{\textbf{1.0}} \\ \hline
\end{tabular}
}
\label{tab:added-baselines-hyperedge-classification}
\vspace{-1mm}
\end{table*}
}

\report{
\revision{
\stitle{Adapting \nprox and \heprox to existing methods} We explore two strategies to integrate the key intuitions of \nprox and \heprox into existing methods, evaluating their impact on embedding quality alongside \ours.
First, we adapt our multi-hop random walk process into the \hypertovec framework, yielding \hypertovecplus for node and hyperedge embeddings. Second, for graph neural network models that do not use random walks, such as \tricl, we enhance it by concatenating the \nprox matrix with node features, resulting in \triclplus, where hyperedge embeddings are derived by averaging node embeddings. 
Tables \ref{tab:added-baselines-node-classification}-\ref{tab:added-baselines-hyperedge-classification} summarize the results for node classification, hyperedge link prediction, and hyperedge classification tasks, respectively. The last column shows the average rank of each method across all datasets, with lower ranks indicating better overall performance. 
The extended baselines benefit from incorporating our similarity measures to varying extents. For instance, \hypertovecplus outperforms \hypertovec, while \triclplus shows occasional improvements over \tricl.  Besides, \hypertovecplus integrates our random walk scheme, making it more efficient than \hypertovec's second-order random walks, enabling it to process datasets like 20News and Mushroom within the 24-hour limit.
These results highlight that the impact of our ideas depends on the baseline design. Overall, \ours achieves the best overall rank, consistently delivering superior performance across diverse settings and scaling to large datasets that other methods cannot handle.
}}

\report{
\input{figure_revision/varyparam}}

\report{
\revision{
\stitle{Varying $\alpha$} Figure \ref{fig:param-alpha} shows MiF1 for node classification in (a) and Acc for hyperedge link prediction in (b) as $\alpha$ varies from 0.001 to 0.3. \ours demonstrates consistent performance across most values, except for the very small $\alpha=0.001$. This confirms \ours's robustness to different $\alpha$ values, with $\alpha=0.1$ chosen as the default setting.

\stitle{Varying $\tau$ and $b$} Parameters $\tau$ and $b$ balance approximation accuracy and efficiency, with larger values improving accuracy at higher computational cost. Figure~\ref{fig:param-degree}(a) and (b) show MiF1 for node classification and running time as $\tau$ varies from 1 to 9, while Figure~\ref{fig:param-skedim} depicts results for $b$ ranging from 16 to 256. MiF1 initially improves and then stabilizes as these parameters increase, while running time continues to rise. The default settings of $\tau=3$ and $b=128$ effectively balance quality and efficiency, and performance remains robust across a range of values.}}

\section{Related work}\label{sec:relatedwork}

Existing methods struggle to support attributed hypergraphs natively while scaling for massive data, particularly for the \AHE problem that embeds both nodes and hyperedges. Early hypergraph embedding efforts, like~\cite{zhouLearningHypergraphsClustering2007}, use the Laplacian matrix spectrum for node embeddings, focusing on clustering but neglecting attributes and long-range connectivity. Methods extending node2vec~\cite{groverNode2vecScalableFeature2016} to hypergraphs, such as \texttt{Hyper2vec}~\cite{huangHyper2vecBiasedRandom2019} and its dual-enhanced version~\cite{huangNonuniformHyperNetworkEmbedding2020}, capture long-range relationships via random walks, yet omit attributes and hyperedge embeddings. Another approach ~\cite{yuModelingMultiwayRelations2018a} models hyperedges as multi-linear products of node embeddings, but lacks attribute consideration and scalability on large datasets.

\submission{
}

The emergence of hypergraph neural networks also enables a multitude of recent approaches. \texttt{TriCL}~\cite{leeImMeWere2023} uses contrastive learning to embed nodes, \texttt{HypeBoy}~\cite{kimHypeBoyGenerativeSelfSupervised2023} masks attributes and designs a hyperedge-filling task for  node embedding, and~\cite{duSelfsupervisedHypergraphRepresentation2022} applies GNNs on an expanded graph with cluster-based loss. Nevertheless, these methods focus solely on nodes, incur high training costs (e.g., $O(n^2)$ or worse), and lack hyperedge embedding support.
\villain~\cite{leeVilLainSelfSupervisedLearning2024} formulates the self-supervision as a label propagation process while ignoring node attributes and incurring high training costs. There are  specialized methods to learn node representations for certain hypergraphs, like~\cite{yang_lbsn2vec_2022} for location-based networks,~\cite{xiaSelfSupervisedHypergraphTransformer2022} for  recommendations,~\cite{xuAdaptiveHypergraphNetwork2024} for trust relations, and~\cite{liSpatialTemporalHypergraphSelfSupervised2022} for   crime data, but their targeted design for domain-specific data and limited scalability hinder the applicability for general attributed hypergraphs.

Alternatively, embedding techniques designed for graphs or bipartite graphs can be adapted for attributed hypergraphs.
With each hyperedge converted to a fully connected subgraph via clique-expansion, hypergraphs can be processed by graph embedding methods (\texttt{NetMF}~\cite{qiuNetworkEmbeddingMatrix2018a}, \texttt{STRAP}~\cite{yinScalableGraphEmbeddings2019}, \texttt{LightNE}~\cite{qiuLightNELightweightGraph2021}, etc.) or attributed graph embedding method~\cite{yangScalingAttributedNetwork2020} based on matrix factorization. Star expansion of hypergraphs results in dense edges between two sets of nodes, enabling the adoption of bipartite graph embedding methods (\biane~\cite{huangBiANEBipartiteAttributed2020a}, \texttt{AnchorGNN}~\cite{wuBillionScaleBipartiteGraph2023}). However, these transformations weaken higher-order connections critical to \AHE while producing dense graphs with high complexity, resulting in compromised embedding quality and efficiency, as our experiments show.

\section{Conclusion}\label{sec:conclusion}
This paper proposes the \ours algorithm to tackle the \AHE problem, by efficiently generating node and hyperedge embeddings that preserve the higher-order connectivity among nodes and collective similarities among hyperedges in attributed hypergraphs. By formulating multi-hop similarity measures
and employing optimized decomposition techniques, \ours achieves log-linear time complexity, outperforming 11 baselines across  10 real-world datasets. Its scalability and effectiveness make it well-suited for practical applications. 
Moving forward, we aim to enhance \ours with a dynamic embedding algorithm using incremental decomposition for evolving hypergraphs and a GPU implementation to boost efficiency.

\begin{acks}
This work is supported by grants from the Research Grants Council of Hong Kong Special Administrative Region, China (No. PolyU 25201221, No. PolyU 15205224), and NSFC No. 62202404.
This work is supported by   Smart Cities Research Institute (SCRI) P0051036-P0050643, and grant P0048213 from Tencent Technology Co., Ltd.

\end{acks} 

\report{
\appendix

\section{Proofs}\label{app:proofs}
\begin{proof}[Proof of Lemma \ref{lm:symmetry}]
    The random walk on $\HGA$ has stationary probability $p_s(v_i)=d(v_i)/\vol(\HGA)$~\cite{zhouLearningHypergraphsClustering2007}. Thus, the matrix with $\pv_s$ as diagonal is $\DM_v/\vol(\HGA)$. Two sides of the lemma are written in matrix form as $\vol(\HGA)\PiM\DM_v^{-1}$ and $(\vol(\HGA)\PiM\DM_v^{-1})\transpose$. Then we get $\DM_v^{-1}\PiM\transpose=\DM_v^{-1}\sum_{i=0}^\infty \alpha(1-\alpha)^i (\PM^i)\transpose=\sum_{i=0}^\infty \alpha(1-\alpha)^i \DM_v^{-1}\left(\HM\transpose\WM\DM_e^{-1}\HM\DM_v^{-1}\right)^i=\sum_{i=0}^\infty \alpha(1-\alpha)^i \PM^i \DM_v^{-1}=\PiM\DM_v^{-1}=(\DM_v^{-1}\PiM\transpose)\transpose.$
    \vspace{-1mm}
\end{proof}
\begin{proof}[Proof of Theorem \ref{lm:low-rank}]
    First, we  prove the inequality w.r.t. the approximate \nprox matrix $\PsM_T$. Let $\tilde{\FM} $ denote $\FM \FM\transpose$, and $\tilde{\FM_r} $ denote $\FM_r \FM_r\transpose$. Then, by the definition of $\PsM_T$, the l.h.s. of the first inequality can be written as $\tlog\ewise\left(\FM_r\FM_r\transpose\right)-\tlog\ewise\left(\FM\transpose \FM\right).$ 
    For ease of exposition, we denote the result of this formula by $\AM$. Then the absolute value of an arbitrary element of the matrix $\AM$ is \begin{equation}\label{ineq:UB_entry_A} 
    \small
    \begin{aligned} \textstyle
    &|\AM[i,j]|=\left|\tlog\left(\tilde{\FM}_r[i,j]\right)-\tlog\left(\tilde{\FM}[i,j]\right)\right|\\
    &=\left|\log \left( \max\left\{\tilde{\FM}_r[i,j],1\right\} \right)-\log \left( \max\left\{\tilde{\FM}[i,j],1\right\} \right)\right|\\
        &\leq \left| \max\left\{\tilde{\FM}_r[i,j],1\right\} - \max\left\{\tilde{\FM}[i,j],1\right\}\right|\leq \left| \tilde{\FM}_r[i,j]-\tilde{\FM}[i,j] \right|.
    \end{aligned}
    \end{equation}
    The second equality is due to  $\tlog\ewise(\cdot)$. The first inequality is because $\left|\log x-\log y\right|\leq  \left|x-y\right|$ for any $x,y\geq 1$. The second inequality is because $|\max\{x,1\}-\max\{y,1\}|\leq |x-y|$ for any $x,y\in \RN$.

    Let $\SgM_{r+}$ denote the diagonal matrix of the $(r+1)$-th to $n$-th largest singular values of $\HL$, and the corresponding singular vectors are $\UM_{r+}\in\RN^{(m+n)\times(n-r)}$ and $\VM_{r+}\in\RN^{n\times(n-r)}$. Then we have
    \begin{equation*} \small
    \begin{aligned} \textstyle
    &\left\Vert\tlog\ewise\left(\FM_r\FM_r\transpose\right)-\PsM_T\right\Vert_F^2
    \leq \Vert\tilde{\FM}_r-\tilde{\FM}\Vert_F^2
    = \Vert\FM_r\FM_r\transpose-\FM\FM\transpose\Vert_F^2\\
    =& \vol^2(\HGA)\left\Vert\DM_v^{-1/2}\VM_r\widehat{\SgM}_r\VM_r\transpose\left(\DM_v^{-1/2}\right)\transpose 
- \DM_v^{-1/2}\VM\widehat{\SgM}\VM\transpose\left(\DM_v^{-1/2}\right)\transpose  \right\Vert_F^2\\
    =& \vol^2(\HGA)\!\left\Vert\DM_v^{-1/2}\VM_{r+}\widehat{\SgM}_{r+}\!\VM_{r+}\transpose\DM_v^{-1/2}\right\Vert_F^2
    \! \! \leq \! \vol^2\!(\HGA)\!\!\left[\!\left\Vert\DM_v^{-1/2}\right\Vert_F^2\!\right]^2\!\left\Vert\VM_{r+}\widehat{\SgM}_{r+}\!\VM_{r+}\transpose\right\Vert_F^2\\
    \leq & \textstyle\left[\left\Vert\DM_v^{1/2}\right\Vert_F^2\left\Vert\DM_v^{-1/2}\right\Vert_F^2 \sum_{i=r+1}^{n}\widehat{\SgM}[i,i]\right]^2,
    \end{aligned}
    \end{equation*}
    where the first inequality is due to Ineq. \eqref{ineq:UB_entry_A}. The second inequality is due to the property of the Frobenius norm. The third inequality follows by rewriting $\vol(\HGA)$ as a Frobenius norm.   $\VM_{r+}$ and $\VM_{r+}\transpose$ are orthogonal and $\widehat{\SgM}_{r+}$ is diagonal. Thus, $\VM_{r+}\widehat{\SgM}_{r+}\VM_{r+}\transpose$ must be the SVD decomposition of some matrix, and the Frobenius norm of that matrix is the sum of the squared singular values. Then, we apply $\sum_i a_i^2\leq (\sum_i a_i)^2$ where $a_i\geq 0$.   Deriving the upper bound in the second inequality of Theorem \ref{lm:low-rank} is similar and thus omitted.  \end{proof}

}

\clearpage
\balance
\bibliographystyle{ACM-Reference-Format}
\bibliography{sample}

\end{document}